\newif\ifpersonal
\newif\iflipics
\DeclareMathOperator{\opt}{opt}
\DeclareMathOperator{\opta}{opt_{aug}}
\providecommand{\eqdef}{:=}
\newcommand{\etal}{{\em et al.\ }\xspace}
\renewcommand{\paragraph}{\subparagraph}
\begin{document}

\iflipics
\title{A New Approach for Approximating Directed Rooted Networks}
\author[1]{Lior Kamma}
\affil[1]{Weizmann Institute of Science \\ \texttt{lior.kamma@weizmann.ac.il}}
\authorrunning{L. Kamma}
\Copyright{Lior Kamma}
\subjclass{G.2.1 Combinatorics, G.2.2 Graph Theory.}
\keywords{Steiner Networks, Directed Graphs, Approximation Algorithms.}
\else
\title{A New Approach for Approximating Directed Rooted Networks}
\author{Sarel Cohen\inst{1}\orcidID{0000-0003-4578-1245} 
\and
Lior Kamma \inst{1}\orcidID{0000-0002-2380-7195} 
\and
Aikaterini Niklanovits\inst{2}\orcidID{0000-0002-4911-4493}}
\authorrunning{S. Cohen et al.}
%
\institute{The Academic College of Tel Aviv-Yaffo, Israel
\email{\{sarelco, liorkm\}@mta.ac.il}\\
\and
Hasso Plattner Institute, University of Potsdam, Germany\\
\email{Aikaterini.Niklanovits@hpi.de}}
%
%
\fi
\maketitle         
\begin{abstract}
We consider the {\em $k$-outconnected directed Steiner tree} problem ($k$-DST). Given a directed edge-weighted graph $G=(V,E,w)$, where $V = \{r\} \cup S \cup T$, and an integer $k$, the goal is to find a minimum cost subgraph of $G$ in which there are $k$ edge-disjoint $rt$-paths for every terminal $t \in T$.
%
%
The problem is known to be NP-Hard. Furthermore, 
the question on whether a polynomial time, subpolynomial approximation algorithm exists for $k$-DST was answered negatively by Grandoni \etal (2018), by proving an approximation hardness of $\Omega(|T|/\log |T|)$ under NP$\neq$ZPP.

Inspired by modern day applications, we focus on developing efficient algorithms for $k$-DST in graphs where terminals have out-degree $0$, and furthermore constitute the vast majority in the graph.
%
We provide the first approximation algorithm for $k$-DST on such graphs, in which the approximation ratio depends (primarily) on the size of $S$. We present a randomized algorithm that finds a solution of weight at most $O(k|S| \log|T|)$ times the optimal weight, and with high probability runs in polynomial time.
\end{abstract}



\section{Introduction}

{\em Network design problems} deal with finding, inside a large network $G$, a cheap sub-network satisfying certain criteria. 
This class of problems captures a wide variety of theoretical problems, among which are the minimum spanning tree problem, minimum Steiner tree (or forest) problem or minimum $k$-connected subgraph in both directed and undirected graphs (see, e.g. \cite{KM05,KN07,GK11}).
%
In addition to its theoretical importance, network design research is of great interest for developers of practical networks such as telecommunication networks.

Our focus in this paper is the {\em minimum $k$-outconnected directed Steiner tree} problem ($k$-DST). An instance of $k$-DST consists of a directed edge-weighted graph $G=(V,E,w)$, a designated node $r \in V$ called the {\em root}, a subset $T \subseteq V \setminus \{r\}$ of vertices called {\em terminals} and an integer $k$. Nodes in $V \setminus (T \cup \{r\})$ are referred to as {\em Steiner nodes}, and the set of Steiner nodes will be denoted by $S$. The goal is to find a minimum-weight subgraph of $G$, in which for every $t \in T$ there are at least $k$ edge-disjoint $rt$-paths.
%
From a theoretical perspective, $k$-DST is one of the most fundamental and intensively studied problems in computer science. The most extensively studied variation of the problem is the {\em minimum directed Steiner tree} (DST), in which $k=1$. This is, in part, due to the fact that many combinatorial problems are known to reduce to DST. Notable such examples among network design problems are the minimum node-weighted Steiner tree \cite{KR95} in both directed and undirected graphs and the group Steiner problem \cite{GKR00}. 

Moreover, $k$-DST has many applications in network design, circuit layouts, and phylogenetic tree reconstruction \cite{hwang1992steiner} and
specifically in designing {\em multicast} schemes \cite{OP05,Voss06}. When designing a multicast routing scheme, the goal is to distribute information from a single source to multiple destinations, while minimizing the usage of network resources (e.g. cost of links). To ensure that the network is robust against link failures, we require that the network contains several source-destinations routes that are, in a sense, mutually independent. The connection to $k$-DST is straightforward. Modeling the network as a directed graph, where edges model point-to-point links, a multicast routing scheme which can survive $k-1$ failures is a $k$-outconnected Steiner tree.

\paragraph{Networks Consisting of Mostly Terminals.} 
Upon approaching $k$-DST, one might be inclined to think of $T$ as a "small" designated subset of $V$. Indeed, in the majority of previous work, the approximation factors of approximation algorithms for $k$-DST mainly depend on the size of the terminal set $T$. In general, however, this need not be the case. In fact, modern day applications present networks in which the lion's share of the network consists of terminals. Consider, for 
example the problem of efficiently and reliably distributing critical healthcare information, including patient data and medical resources, ensuring minimal transmission costs. 
In particular, the root in this case is a central health database, the Steiner nodes are regional hospitals and clinics and the terminal set consists of remote clinics in rural areas, which are the majority of health sites and do not transmit any information.

%
From a theoretical point of view, $k$-DST is still an interesting problem when restricted to instances in which most of the nodes of $G$ are terminals and have no outgoing edges. Specifically, known hardness results for the problem hold even when $|V \setminus T| = O(|V|^{1/d})$ for some constant $d \in \mathbb{N}$. Moreover, there is no evidence that the problem is easier (even to approximate) when restricted to instances where $|V \setminus T|$ is polylogarithmic in $|V|$.

\paragraph{Related Work.}
$k$-DST is known to be NP-hard, as it models, for example, the Steiner tree in undirected graphs as a special case \cite{Karp72}.
However, many special cases are widely studied with positive results.
For example, the case where $T =V \setminus \{r\}$, known as \emph{minimum $k$-outconnected spanning subgraph problem} is proven to be polynomially time solvable \cite{Edmonds67}, as was also generalized by Frank and Tardos \cite{frank1988generalized} (see also \cite{frank2009rooted}).
Their algorithm was used as a subroutine to obtain a $2$-approximation algorithm for the undirected variant and its generalization \cite{khuller1994biconnectivity}, which is proven to be NP-hard.

When the existence of non-terminal vertices in the graph is also considered, the complexity of $k$-DST is increased, and the focus hence swifts to approximation algorithms.
For the case where $k=1$ for example, Charikar \etal \cite{CCC99} presented a family of greedy approximation algorithms ${\cal A}_d$, indexed by an integer $d>1$. By employing a technique introduced by Zelikovsky \cite{Zelikovsky97}, for every $d>1$ they showed a reduction of the instance $G$ to a new instance $G'$ consisting of a $d$-layered graph. The algorithm ${\cal A}_d$ returns a feasible solution (in $G$) of weight at most $O(d^3|T|^{1/d})$ times the optimal weight, and runs in time $|V|^{O(d)}$. A similar result has been given by Rothvo{\ss} \cite{Rothvoss11}, who applied the Lasserre hierarchy in order to obtain a family of algorithms giving similar approximation guarantees.
This result was recently extended to the case $k=2$ by Grandoni and Laekhanukit~\cite{GL16}. For every $d>1$, they develop a polynomial time approximation algorithm of ratio $\tilde{O}(d^3 |T|^{1/d})$ and running time $O(|V|^{O(d)})$.
For the undirected case of the problem when restricting the value of $k$ to be $1$, Byrka \etal provided a $1.39$-approximation algorithm \cite{byrka2013steiner}.

Moreover, the {\em directed Steiner tree} problem (i.e. $k=1$), is solved by a trivial algorithm in time $2^{|S|}\cdot |V|^{O(1)}$, thus if $S$ is ``small'' (i.e. $|S| \le O(\log |V|)$) the problem is solvable in polynomial time. On the other hand, the Dreyfus–Wagner algorithm \cite{DW71} solves the problem in time $O( 3^{|T|} \cdot |V|^2 )$, thus showing that it can also be solved in polynomial time if the number of terminals is ``small''. 
It is hence natural to consider developing algorithms for $k$-DST whose running time or approximation ratio is based on the amount of terminals.
Such approximation algorithms for $k$-DST are developed only for the special cases where $G$ is an $L$-layered graph, by Laekhanukit who gave a  $\mathcal O (k^L L  log |T|)$-approximation algorithm \cite{DBLP:conf/icalp/Laekhanukit16}, and for the case where $G$ is a quasi-bipartite graph for which Chan \etal gave a polynomial time $\mathcal O(\log |T|\log k)$-approximation algorithm \cite{DBLP:conf/approx/ChanLW020}.

When it comes to hardness of $k$-DST much more results have been obtained.
Cheriyan \etal \cite{CLNV14} proved it to be at least as hard to approximate as the label cover problem. Therefore there is no $2^{\log^{1- \varepsilon}|V|}$-approximation for $k$-DST for any fixed $\varepsilon > 0$ unless $NP \subseteq DTIME(n^{polylog(n)})$. Laekhanukit \cite{L14} showed that unless $NP=ZPP$, the $k$-DST problem admits no $k^{1/4 - \varepsilon}$-approximation for any $\varepsilon>0$.
We note that this hardness result holds even when restricted to instances in which $|V \setminus T| = O(|V|^{1/d})$ for some constant $d \in \mathbb{N}$ via a simple polynomial reduction.
Another polynomial reduction of $k$-DST that provides us with a lower bound for approximating it is the one to the {\em set cover} problem.
Nelson \cite{Nelson07} showed that for every $c < 0.5$, set cover cannot be approximated in polynomial time within $2^{\log^{1 - \delta_c(m)}(m)}$, unless SAT on $m$ variables can be solved in time $2^{O(2^{\log ^{1 - \delta_c(m)}(m)})}$, where $\delta_c(m) = 1 - \frac{1}{(\log \log m)^c}$.
Through a simple polynomial time reduction we are able to obtain a lower bound for approximating $k$-DST (in fact, even DST) in terms of $|S|$.
In particular we conclude that $k$-DST cannot be approximated in polynomial time to within $2^{\log^{1 - \delta_c(|S|)}(|S|)}$, unless SAT on $m$ variables can be solved in time $2^{O(2^{\log ^{1 - \delta_c(m)}(m)})}$. One of the most recent results is the one from Liao \etal \cite{DBLP:conf/icalp/LiaoCL022} where the improved the prreviously known lower approximation hardness to $\Omega(|T|/ \log |T|))$ under NP$\neq$ ZPP, $\Omega ( 2^{k/2}/k)$ under NP$\neq$ ZPP and $\Omega(k/L)^{L/4}$ on $L$-layered graphs for $L\leq O(\log n)$.

\subsection{Our Results}
Let $G=(\{r\} \cup S \cup T,E,w)$, $k \in \mathbb{N}$ be an instance of $k$-DST. Assume in addition that every edge $uv \in E$ satisfies that $u \notin T$. The main contribution of this paper is providing the first approximation algorithm for $k$-DST on graphs where the terminals do not transmit any information. Our algorithm is a randomized Las Vegas algorithm, which runs, with high probability, in polynomial time. 
\begin{theorem} \label{th:main}
There is a randomized $O(k|S| \log|T|)$-approximation algorithm for $k$-DST when the outdegree of the vertices in $T$ is $0$, that runs in polynomial time with high probability.
\end{theorem}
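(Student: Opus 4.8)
The idea is to isolate a bounded-size ``hard core'' of the instance, solve the single-terminal problem on it exactly by a flow computation, and glue the terminals onto it with a covering argument. We first reformulate feasibility: by Menger's theorem a subgraph $H\subseteq E$ is feasible if and only if for every $t\in T$ the maximum $r$--$t$ flow in $H$ has value at least $k$. Because every terminal has out-degree $0$, any such flow decomposes into $k$ edge-disjoint paths whose internal vertices all lie in $\{r\}\cup S$ and whose last arc enters $t$; in particular every arc of $G$ either lies inside the \emph{core} $G_0:=G[\{r\}\cup S]$ or enters a single terminal, so $w(H)=w(H\cap E(G_0))+\sum_{t\in T}w(H\cap\delta^-(t))$ for every subgraph $H$. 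Thus a feasible subgraph is precisely a choice of a core subgraph $H_0\subseteq G_0$ together with, for each $t$, a set of at least $k$ arcs of $\delta^-(t)$ such that $H_0$ together with those arcs carries a $k$-flow from $r$ to $t$. The elementary but crucial point is that, given $H_0$ and a single $t$, a cheapest admissible set of such ``stub'' arcs (and whether one exists) is computed by a single minimum-cost flow: add an auxiliary sink $z$, add an arc $(v,z)$ of cost $w(v,t)$ for each in-neighbour $v$ of $t$, and compute a minimum-cost value-$k$ flow from $r$ to $z$ in $H_0$ augmented by these arcs. In particular the single-terminal case of $k$-DST is solvable in polynomial time, so the whole difficulty is in letting the terminals \emph{share} the part of the solution living on the bounded-size core.

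The algorithm is a greedy covering of $T$. We maintain the part $H_0$ of the core bought so far and the set $T_{\mathrm{rem}}$ of not-yet-served terminals; in each round we add a batch $\Delta$ of new core arcs together with the cheapest stub arcs for every terminal of $T_{\mathrm{rem}}$ that becomes servable through $H_0\cup\Delta$, choosing $\Delta$ so as to minimise the ratio of cost paid in the round to the number of terminals newly served, and we iterate until $T_{\mathrm{rem}}=\emptyset$. By the usual set-cover / harmonic-sum argument, if in every round we can efficiently find a batch whose density is within a factor $\rho$ of $\opt/|T_{\mathrm{rem}}|$, then the produced subgraph is feasible and has weight $O(\rho\log|T|)\cdot\opt$. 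So the task reduces to finding, in polynomial time, a batch of density $O(k|S|)\cdot\opt/|T_{\mathrm{rem}}|$; this is the source of the remaining $O(k|S|)$ factor.

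Finding a good batch is the main obstacle, and also where the randomisation and the Las Vegas nature enter. The natural batches are \emph{junction augmentations}: fix a port set $J\subseteq\{r\}\cup S$ with $|J|=k$, let $\Delta$ be a cheapest set of core arcs making $H_0\cup\Delta$ carry $k$ edge-disjoint $r$-paths ending at the $k$ vertices of $J$ (once more a min-cost flow, with arcs already in $H_0$ made free), and serve every remaining terminal that has an arc from each vertex of $J$. Decomposing an optimal solution according to the port set formed by the last arcs of the $k$ paths reaching each terminal shows that some junction augmentation serves a non-negligible fraction of $T_{\mathrm{rem}}$ at cost at most $\opt$; when $k=O(\log|V|)$ we can simply enumerate all $\binom{|S|+1}{k}$ port sets and take the densest batch. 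For general $k$ the enumeration is too costly, and the plan is instead to assemble the port set incrementally, guided by repeated minimum-cost flow computations and by random sampling, restarting whenever a sampled structure turns out too expensive; capping the number of restarts keeps the running time polynomial with high probability. Two points genuinely need care beyond this sketch. First, one must show that a batch of density $O(k|S|)\cdot\opt/|T_{\mathrm{rem}}|$ always exists: the crude bound on the number of distinct port sets is only $\binom{|S|+1}{k}$, so either several port sets that share core arcs must be bundled into a single batch, or one must argue more delicately about how an optimal solution apportions its cost over the $|S|+1$ core vertices. Second, the randomised assembly of the port set must succeed with at least constant probability. Granting these, assembling the core arcs bought over all rounds together with all the bought stub arcs yields a feasible subgraph of weight $O(k|S|\log|T|)\cdot\opt$, as claimed.
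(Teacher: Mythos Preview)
Your architecture is quite different from the paper's. The paper does not do a greedy covering of terminals; instead it solves $k$ successive connectivity-augmentation problems (raising the $rt$-connectivity from $\ell$ to $\ell+1$ for all $t$ simultaneously), which is where the factor $k$ comes from. For each augmentation step it exploits that the family of deficient cuts is $T$-intersecting, introduces \emph{strict cores}, shows there are at most $2^{|S|}|T|$ of them, builds an auxiliary graph in which covering strict cores also covers all cores, and then applies implicit-hitting-set randomised rounding; iterating at most $\log|T|$ times finishes the augmentation. No port-set or junction-tree structure is used, and the randomness enters only through the rounding, not through sampling structural pieces of the solution.

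More importantly, your argument has a genuine gap that you yourself flag but do not close: the existence of a batch of density $O(k|S|)\cdot\opt/|T_{\mathrm{rem}}|$. Partitioning $T_{\mathrm{rem}}$ by the port set $J_t\subseteq\{r\}\cup S$ used in $\opt$ and pigeonholing only shows that some junction augmentation has density at most $\binom{|S|+1}{k}\cdot\opt/|T_{\mathrm{rem}}|$, which for $k=\Theta(|S|)$ is exponential in $|S|$, not $O(k|S|)$. You propose ``bundling port sets that share core arcs'' or ``arguing more delicately about how an optimal solution apportions its cost'', but neither is carried out, and there is no evident reason the core subgraphs reaching different $k$-subsets of $S$ should overlap enough to rescue the bound. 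Likewise, the randomised incremental assembly of the port set for general $k$ is described only as ``guided by repeated minimum-cost flow computations and by random sampling'', with no concrete procedure or analysis. With these two pieces merely assumed (``Granting these''), the plan as written establishes at best an $O\bigl(\binom{|S|+1}{k}\log|T|\bigr)$ guarantee via full enumeration when $k$ is small, which is strictly weaker than the theorem.
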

This approximation guarantee depends primarily on the size of $S$, and up to a logarithmic factor is independent of the size of $T$. If $|S| = O(|T|^{1/d})$ for some constant $d>1$, the approximation ratio of our algorithm is comparable with that of Charikar \etal \cite{CCC99}. Moreover, if $|S|$ is polylogarithmic in $|T|$, we achieve a polylogarithmic approximation ratio (in terms of $|V|$) in polynomial time.

\subsubsection{Overview and Techniques.}
Throughout this paper let $G=(\{r\}\cup S \cup T,E,w),k$ be an instance of $k$-DST, where $V=\{r\}\cup S \cup T$, and
${\cal U} \eqdef \{U \subseteq V \setminus \{r\} : U \cap T \ne \emptyset\}\;.$
An $e=xy \in E$ {\em covers } $U \in {\cal U}$ if $e$ enters $U$, i.e. $x \notin U$ and $y \in U$. Given a set $F \subseteq E$, denote by 
$\varrho_F(U) \eqdef \{xy \in F : x \notin U,\; y \in U\}$ the set of edges in $F$ covering $U$.
By the minimum-cut maximum-flow theorem, a set $F \subseteq E$ is a feasible solution for $k$-DST if and only if every $U \in {\cal U}$ is covered by at least $k$ edges, i.e. $|\varrho_F(U)| \ge k$. 

\paragraph{Connectivity Augmentation. } 
An intermediate problem between DST and $k$-DST is the one of increasing the rooted outconnectivity of a graph by one (connectivity augmentation problem). Formally, let $\ell \in \mathbb{N}$. Given a graph $G=(V,E,w)$, a root $r \in V$, a set of terminals $T \subseteq V \setminus \{r\}$, and a set $E_\ell \subseteq E$ of edges such that in the subgraph $(V,E_\ell)$ of $G$ there are $\ell$ edge-disjoint $rt$-paths for every $t \in T$, we seek a minimum cost set $F \subseteq E \setminus E_\ell$ such that in $(V,E_\ell \cup F)$ there are $\ell+1$ edge disjoint $rt$-paths for every $t \in T$. An additional result presented in this paper is the following.

\begin{theorem}\label{th:approxAugm}
There is a randomized approximation algorithm for the connectivity augmentation problem when the outdegree of the vertices in $T$ is $0$, that constructs an $O(|S|\log|T|)$-approximate solution. Also, with probability at least $1 - \frac{\log|T|}{2^{|E|}}$ the algorithm runs in polynomial time.
\end{theorem}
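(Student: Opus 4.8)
The plan is to rephrase the augmentation task as a cut-covering problem, to use the out-degree-$0$ hypothesis to show that the family of cuts one must cover is, per terminal, of size only $2^{O(|S|)}$, and then to cover this family in $O(\log|T|)$ rounds of LP rounding, each losing a factor $O(|S|)$.

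\emph{Step 1: reduction to cut covering.} By the max-flow/min-cut theorem, for $F\subseteq E\setminus E_\ell$ the graph $(V,E_\ell\cup F)$ has $\ell+1$ edge-disjoint $rt$-paths for every $t\in T$ if and only if $\varrho_F(U)\neq\emptyset$ for every $U\in{\cal U}$ with $|\varrho_{E_\ell}(U)|=\ell$; call such a $U$ \emph{deficient} (since $|\varrho_{E_\ell}(U)|\ge\ell$ for all $U\in{\cal U}$, a deficient set is precisely a minimum $r$-$t$ cut of value $\ell$ for a terminal $t\in U$). Thus it suffices to find a minimum-weight $F\subseteq E\setminus E_\ell$ covering every deficient set, and the optimum of this problem equals $\opta$. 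Applying the same reduction to the current graph $E_\ell\cup F_{<i}$ lets us work, in round $i$, only with the sets $U$ satisfying $|\varrho_{E_\ell\cup F_{<i}}(U)|=\ell$ that still contain an $\ell$-connected terminal.

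\emph{Step 2: a small canonical family (the crux).} Since every edge of $G$ leaves $\{r\}\cup S$ and $r\notin U$ for $U\in{\cal U}$, every deficient set lies in $S\cup T$, all edges entering it leave $(\{r\}\cup S)\setminus U$, and every simple $rt$-path visits at most $|S|$ internal vertices, all of them Steiner. I would prove the structural lemma that for each terminal $t$ there is a family ${\cal C}_t$ of at most $2^{O(|S|)}$ minimum $r$-$t$ cuts, indexed by their traces on $S$, such that any $F$ covering all members of ${\cal C}_t$ covers \emph{every} minimum $r$-$t$ cut. The uncrossing behind this uses submodularity of the function $U\mapsto|\varrho_{E_\ell}(U)|$ together with the observation that a terminal $t'\neq t$ can be slid into or out of a minimum cut only jointly with (part of) its in-neighbourhood, which lies in $\{r\}\cup S$ and is hence small; this caps the number of terminals that can vary inside a cut of fixed $S$-trace and collapses the a-priori exponential lattice of minimum $r$-$t$ cuts to ${\cal C}_t$. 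This is the step I expect to be the main obstacle, and it is exactly where the out-degree-$0$ hypothesis is used.

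\emph{Step 3: LP rounding in $O(\log|T|)$ rounds.} Run $\lceil\log_2|T|\rceil$ rounds, maintaining an edge set $F_{<i}$ accumulated so far. In round $i$ let $T_i$ be the terminals whose $r$-$t$ connectivity in $E_\ell\cup F_{<i}$ is still $\ell$ (so $T_1=T$, and by Step 1 the final set $\bigcup_i\hat F_i$ is feasible once $T_i=\emptyset$). Consider the covering LP minimising $w\cdot x$ over $x\ge0$ supported on the not-yet-bought edges, subject to $\sum_{e\in\varrho_E(U)}x_e\ge1$ for every $U$ with $U\cap T_i\neq\emptyset$ and $|\varrho_{E_\ell\cup F_{<i}}(U)|=\ell$. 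It has a polynomial separation oracle (give every edge of $E_\ell\cup F_{<i}$ unit capacity, every other edge $e$ capacity $x_e$, and check that the maximum $r$-$t$ flow is at least $\ell+1$ for each $t\in T_i$), and since the restriction to the not-yet-bought edges of an optimal augmenting set is LP-feasible, the LP optimum is at most $\opta$. Let $x^*$ be optimal, fix $\alpha=\Theta(|S|)$ with a sufficiently large constant, and include each edge $e$ in $\hat F_i$ independently with probability $\min\{1,\alpha x^*_e\}$. Then $\Pr[\varrho_{\hat F_i}(U)=\emptyset]\le e^{-\alpha\sum_{e\in\varrho_E(U)}x^*_e}\le e^{-\alpha}$ for every constraint $U$, so by Step 2 a fixed $t\in T_i$ attains connectivity $\ell+1$ after round $i$ except with probability at most $2^{O(|S|)}e^{-\alpha}\le\tfrac14$; hence by Markov at least $|T_i|/2$ terminals attain it with probability at least $\tfrac12$, while $\mathbb E[w(\hat F_i)]\le\alpha\,(w\cdot x^*)\le\alpha\opta$. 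Repeat the round-$i$ rounding independently $\Theta(|E|)$ times and keep the first trial that both halves $T_i$ and has weight $O(|S|)\opta$ (both verifiable in polynomial time by max-flow computations); such a trial fails to appear with probability at most $2^{-\Omega(|E|)}$. After the $\lceil\log_2|T|\rceil$ rounds, $F=\bigcup_i\hat F_i$ is feasible with weight $O(|S|\log|T|)\opta$, and a union bound over the rounds shows the procedure runs in polynomial time with probability at least $1-\tfrac{\log|T|}{2^{|E|}}$; wrapping everything in a Las Vegas loop that restarts on the rare failure keeps the output always correct. Together these steps yield the claimed $O(|S|\log|T|)$-approximation.
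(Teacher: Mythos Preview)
Your proposal is correct and takes a route that is genuinely different from, and in fact simpler than, the paper's.

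Your Step~2 is easier than you anticipate: for a deficient terminal $t$, take ${\cal C}_t=\{U: t\in U\subseteq S\cup\{t\},\ |\varrho_{E_\ell\cup F_{<i}}(U)|=\ell\}$, so $|{\cal C}_t|\le 2^{|S|}$. For any minimum $r$--$t$ cut $W$, set $W':=(W\cap S)\cup\{t\}$. Since terminals have out-degree $0$, stripping terminals from $W$ cannot introduce new entering edges, hence $\varrho(W')\subseteq\varrho(W)$ and $W'\in{\cal C}_t$; and any edge $xy$ entering $W'$ has its tail $x\in\{r\}\cup S$, so $x\notin W'$ forces $x\notin W$, i.e.\ $xy$ also enters $W$. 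No uncrossing or sliding argument is needed. With this in hand your Step~3 goes through (one small point: the weight test in each trial should compare against the computable LP value $w\cdot x^*$, not against the unknown $\opta$; adjusting the Markov constants so that the ``halves $T_i$'' and ``weight small'' events jointly occur with probability bounded away from $0$ is routine).

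The paper proceeds differently. It iterates by halving the number of inclusion-minimal members of the $T$-intersecting family ${\cal F}^F$ (rather than the number of deficient terminals), which in Nutov's framework requires covering all \emph{cores} of ${\cal F}^F$ in each round. The natural small family --- the \emph{strict} cores, those sharing their terminal set with a minimal member, of which there are at most $2^{|S|}|T|$ --- does not suffice for this purpose: an edge whose tail lies inside one minimal core and whose head lies in a strict core around another covers that strict core without reducing the core count. The paper therefore constructs an auxiliary graph $G^F$ (a refined metric completion in which such edges are rerouted) so that in $G^F$ covering the strict cores is equivalent to covering all cores, and only then applies LP rounding. Your approach sidesteps this entire auxiliary-graph construction by changing the halving invariant from ``number of minimal tight sets'' to ``number of deficient terminals''; what you give up is the direct connection to Nutov's core machinery, but for the present theorem your argument is shorter and more elementary.
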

The connection to $k$-DST is quite clear. Starting with an empty set of edges, our approximation algorithm iteratively finds small-weight solutions to the connectivity augmentation problem $k$ times, and produces edge sets $E_1, \ldots, E_k \subseteq E$ whose union is a small-weight feasible solution to $k$-DST, incurring an additional factor of $k$ in the approximation ratio.
We note that very often when solving connectivity problems by repeatedly invoking a connectivity augmentation mechanism, it is possible to reduce the additional factor to the approximation ratio from $k$ to $\log k$, using a standard linear-programming scaling technique introduced by Goemans \etal \cite{DBLP:conf/soda/GoemansGPSTW94}. The algorithm presented in this paper for the augmentation problem, however, is invoked on an auxiliary graph, rather than on $G$. This auxiliary graph, and thus the induced linear program, change between invocations, and hence it is not clear whether the scaling technique is applicable.

Given an instance to the connectivity augmentation problem, and applying the minimum-cut maximum-flow theorem once more, we deduce that a set $F \subseteq E \setminus E_\ell$ is a feasible solution to the problem if and only if $|\varrho_{E_\ell \cup F}(U)| \ge \ell+1$ for every $U \in {\cal U}$. From our assumptions on $E_\ell$, we have $|\varrho_{E_\ell}(U)| \ge \ell$ for every $U \in {\cal U}$. In addition $F \cap E_\ell = \emptyset$, and therefore $F$ is a feasible solution if and only if $\varrho_F(U) \ne \emptyset$ for all $U \in {\cal U}_\ell$, where ${\cal U}_\ell \eqdef \{U \in {\cal U} : |\varrho_{E_\ell}(U)| = \ell\} \;.$ Hence, the augmentation problem is formulated as the following integer program.
\begin{equation}
\arraycolsep=1.4pt\def\arraystretch{1.6}
\begin{array}{lll}
min \sum_{e \in E}{w_ex_e}& \qquad s.t.\qquad &\sum_{e \in \varrho_{E \setminus E_\ell}(U)}{x_e} \ge 1 \quad \quad  \forall U \in {\cal U}_\ell,\\
 &   &x_e \in \{0,1\} \quad \quad \forall e \in E
\end{array}
\tag{IP}
\label{eq:IP}
\end{equation}
\paragraph{Implicit Hitting Set.} In the {\em hitting set} problem, we are given a ground set $E$, with weights $\{w_e\}_{e \in E}$, and a collection ${\cal S} \subseteq 2^E$ of subsets of $E$. The goal is to find a minimum weight subset $F \subseteq E$ such that $F \cap U \ne \emptyset$ for all $U \in {\cal S}$. The problem is known to be NP-hard \cite{Karp72}, and can be approximated by an $O(\log|{\cal S}|)$-factor via a greedy algorithm. Karp \etal \cite{KM13} observed that many combinatorial optimization problems possess the following property. Given an instance $\Pi$, we construct in polynomial time an instance $E,w,{\cal S}$ to the hitting set problem, where $E,w$ are given explicitly, and ${\cal S}$, whose size might be exponential in the size of $\Pi$, is given implicitly by a membership oracle. 
Karp \etal named this setting the {\em implicit hitting set} problem, and observed that the well-known greedy algorithm for the hitting-set problem cannot be used to approximate this type of problems, as it examines all sets in ${\cal S}$. 

We show that under reasonable assumptions we still achieve a randomized polynomial time $O(\log |{\cal S}|)$-approximation algorithm. Specifically, we slightly modify the well-known randomized rounding algorithm for hitting set, and devise a $O(\log |{\cal S}|)$-approximation algorithm that, with very high probability, say $\ge 1 - 2^{-|E|}$, runs in time polynomial to $|E|$. Details are deferred to Appendix~\ref{app:hitSetReg}.
The connectivity augmentation problem can be viewed as an implicit hitting set instance, where ${\cal S} \eqdef \{ \varrho_{E \setminus E_\ell}(U) : U \in {\cal U}_\ell \}$.
However, even for $\ell = 0$ we get $\log|{\cal S}| = \Omega(|V|)$. Achieving a $|T|$-approximation for the problem is immediate (increase the $rt$-connectivity by one for each $t \in T$ separately), hence we seek to exploit its structure, and specifically the structure of ${\cal U}_\ell$ to improve our guarantees.
%
%
%

A set-family ${\cal G} \subseteq 2^V$ is called {\em intersecting}, if when $X,Y \in {\cal G}$ and $X \cap Y \ne \emptyset$, then $X \cap Y, X \cup Y \in {\cal G}$. 
Frank \cite{Frank79} observed that if $T=V \setminus \{r\}$, the family ${\cal U}_\ell$ is intersecting, and also showed that the problem of covering intersecting set-families by directed edges can be solved optimally efficiently. In general, however, ${\cal U}_\ell$ is not intersecting. 
In a following paper \cite{Frank99} Frank coined the refined notion of $T$-Intersecting Families.


\begin{definition}
A set family ${\cal G} \subseteq 2^V$ 
is called {\em $T$-intersecting} if for every $X \in {\cal G}$, $X \cap T \ne \emptyset$; and for every $X,Y \in {\cal G}$, if $X \cap Y \cap T \ne \emptyset$ then $X \cap Y , X \cup Y \in {\cal G}$.
\end{definition}
The proof of Proposition \ref{p:tight} is moved to the Appendix due to lack of space.
\begin{proposition} \label{p:tight}
${\cal U}_\ell$ is $T$-intersecting.
\end{proposition}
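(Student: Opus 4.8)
The plan is to deduce the $T$-intersecting property of ${\cal U}_\ell$ from two ingredients already in hand: the max-flow min-cut characterization recorded above, which gives $|\varrho_{E_\ell}(U)| \ge \ell$ for every $U \in {\cal U}$, and the submodularity of the in-degree function $U \mapsto |\varrho_{E_\ell}(U)|$ on subsets of $V$. In other words, this is the classical uncrossing argument applied to the tight sets of a submodular function.

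First, the requirement that every $X \in {\cal U}_\ell$ satisfy $X \cap T \ne \emptyset$ is immediate, since ${\cal U}_\ell \subseteq {\cal U}$ by definition. So I fix $X, Y \in {\cal U}_\ell$ with $X \cap Y \cap T \ne \emptyset$ and must show $X \cap Y \in {\cal U}_\ell$ and $X \cup Y \in {\cal U}_\ell$. Both $X \cap Y$ and $X \cup Y$ are subsets of $V \setminus \{r\}$ (as $X$ and $Y$ are), $X \cap Y$ meets $T$ by hypothesis, and $X \cup Y \supseteq X$ meets $T$ as well; hence $X \cap Y, X \cup Y \in {\cal U}$, and the cut bound yields $|\varrho_{E_\ell}(X \cap Y)| \ge \ell$ and $|\varrho_{E_\ell}(X \cup Y)| \ge \ell$.

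The core step is the submodularity inequality $|\varrho_{E_\ell}(X)| + |\varrho_{E_\ell}(Y)| \ge |\varrho_{E_\ell}(X \cap Y)| + |\varrho_{E_\ell}(X \cup Y)|$. I would establish it through the standard edge-counting identity
\[
|\varrho_{E_\ell}(X)| + |\varrho_{E_\ell}(Y)| = |\varrho_{E_\ell}(X \cap Y)| + |\varrho_{E_\ell}(X \cup Y)| + d(X \setminus Y, Y \setminus X) + d(Y \setminus X, X \setminus Y),
\]
where $d(A,B)$ denotes the number of edges of $E_\ell$ directed from $A$ to $B$; this is verified by partitioning $V$ into the four regions $X \cap Y$, $X \setminus Y$, $Y \setminus X$, and $V \setminus (X \cup Y)$ and checking, case by case on where the two endpoints of an edge lie, that it contributes equally to both sides. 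Dropping the two nonnegative cross terms gives submodularity. Combining this with $|\varrho_{E_\ell}(X)| = |\varrho_{E_\ell}(Y)| = \ell$ and the cut bounds on $X \cap Y$ and $X \cup Y$, I obtain
\[
2\ell = |\varrho_{E_\ell}(X)| + |\varrho_{E_\ell}(Y)| \ge |\varrho_{E_\ell}(X \cap Y)| + |\varrho_{E_\ell}(X \cup Y)| \ge \ell + \ell = 2\ell,
\]
which forces equality throughout. Hence $|\varrho_{E_\ell}(X \cap Y)| = |\varrho_{E_\ell}(X \cup Y)| = \ell$, i.e., $X \cap Y, X \cup Y \in {\cal U}_\ell$, as required.

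I do not anticipate a serious obstacle. The only points needing a little care are that $X \cap Y$ and $X \cup Y$ never contain the root (automatic here, since $X, Y \subseteq V \setminus \{r\}$, so the cut lower bound legitimately applies to them) and the bookkeeping in the submodularity identity, which is routine.
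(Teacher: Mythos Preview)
Your argument is correct and matches the paper's own proof essentially line for line: both invoke submodularity of the in-degree function together with the lower bound $|\varrho_{E_\ell}(U)|\ge \ell$ for $U\in{\cal U}$ to squeeze $|\varrho_{E_\ell}(X\cap Y)|$ and $|\varrho_{E_\ell}(X\cup Y)|$ down to $\ell$. The only difference is that you spell out the edge-counting identity behind submodularity, whereas the paper simply cites it.
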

Frank considered the problem of covering $T$-intersecting families, and showed that if every edge of $G$ enters $T$, then a $T$-intersecting family can be covered efficiently using the primal dual approach. For arbitrary graphs, however, the problem of optimally covering a $T$-intersecting family of sets is NP-hard as it models the directed Steiner tree problem as a special case.
By taking advantage of several structural properties of $T$-intersecting families, we present a mechanism for constructing a small-weight cover of ${\cal U}_\ell$. The mechanism is presented in detail in Section~\ref{sec:high}. Loosely speaking, we iteratively prune ${\cal U}_\ell$, covering small sub-families of ${\cal U}_\ell$ one at a time. Since the approximation ratio of the hitting set randomized algorithm depends on the size of the family to be covered, the problem of covering a small sub-family of ${\cal U}_\ell$ can be better approximated. We show that by pruning ${\cal U}_\ell$ carefully, the process ends after a small number of iterations, thus constructing a feasible solution with better approximation guarantees.

\paragraph{Strict Cores of $T$-Intersecting Families.} 
In \cite{Nutov09} Nutov studied rooted connectivity problems and defined the notion of a {\em core} of a set family. Given a set family ${\cal G} \subseteq 2^V$, a set $X \in  {\cal G}$ is a core of ${\cal G}$ if it contains exactly one inclusion-minimal set of ${\cal G}$. Nutov gave an algorithm for covering so-called bi-uncrossable set families by iteratively covering the sub-families of cores. Formally, the iterative process is as follows. Given a set family ${\cal G}$, find a small-weight edge set $F_1 \subseteq E$ that covers all cores of ${\cal G}$. Denote by ${\cal G}^{F_1}$ the family of all sets in ${\cal G}$ not covered by $F_1$. Continue iteratively by finding $F_2,\ldots,F_t$ such that for every $j \in [t]$, $F_j$ covers all cores of ${\cal G}^{F_1 \cup \ldots \cup F_{j-1}}$.
Nutov additionally bounded the number of iterations needed to cover bi-uncrossable set families, when iteratively covering the sub-families of cores.
In Section~\ref{sec:high} we show that by iteratively covering the sub-family of cores of the $T$-intersecting family ${\cal U}_\ell$, we can guarantee that no more than $\log|T|$ iterations are needed. 
However we also show, the sub-family of cores can be as large as the entire ${\cal U}_\ell$, and hence our approximation guarantees are still too large. 

Therefore, in this paper we present the new notion of {\em strict cores}. A strict core of ${\cal U}_\ell$ is a set $X \in {\cal U}_\ell$ that contains exactly one inclusion-minimal set $C \in {\cal U}_\ell$, and satisfies $X \cap T = C \cap T$. We show that the sub-family of strict cores of ${\cal U}_\ell$ is significantly smaller than the sub-family of cores, and give a polynomial time approximation algorithm for covering the sub-family of strict cores. As we show in Section~\ref{sec:covCor} we can make no guarantee, however, on the number of iterations needed to iteratively cover the sub-family of strict cores. The main technical crux of this paper is therefore the construction of a new auxiliary graph in which every edge cover to the family of strict cores also covers the family of cores.

\iflipics
\section{Approximating \texorpdfstring{$k$}{k}-Outconnected Steiner Tree}
\else
\section{Approximating $k$-Outconnected Steiner Tree}
\fi
We begin the proof of Theorem~\ref{th:main} by showing that it is implied by Theorem~\ref{th:approxAugm}.
To this end, let $G = (\{r\}\cup S\cup T,E,w)$ and $k$ be an instance of $k$-DST. Denote by $\opt(G)$ the value of an optimal solution, and let ${\cal U} = \{U \subseteq V \setminus \{r\} : U \cap T \ne \emptyset\}$.
We present an iterative algorithm that finds a feasible solution for $k$-DST, by iteratively increasing the connectivity between the root and the terminals. The algorithm performs $k$ iterations numbered $0,1,\ldots,k-1$. For every $0 \le \ell \le k-1$, in the beginning of the $\ell$th iteration, the algorithm holds a set $E_{\ell} \subseteq E$ of edges, starting with $E_0 = \emptyset$, such that in $(V,E_{\ell})$, there are at least $\ell$ edge-disjoint $rt$-paths for every $t \in T$. During the $\ell$th iteration, the algorithm invokes the $O(|S|\log|T|)$-approximation algorithm whose existence is implied by Theorem~\ref{th:approxAugm}, to find an approximate solution $F \subseteq E \setminus E_{\ell}$ to the connectivity augmentation instance given by $G,E_{\ell}$. 
The algorithm is described in detail as Algorithm~\ref{alg:iter}. 

\begin{algorithm}
\begin{algorithmic}[1]
\STATE $E_0 = \emptyset$.
\FOR{$\ell = 0,1,\ldots, k-1$}
\STATE let $F \subseteq E \setminus E_{\ell}$ be an $O(|S| \log|T|)$-approximate solution to the connectivity augmentation problem on $G,E_{\ell}$ (apply the algorithm in Theorem~\ref{th:approxAugm}). \label{l:connAug}
\STATE let $E_{\ell+1} = E_{\ell} \cup F$.
\ENDFOR
\RETURN $E_k$.
\end{algorithmic}
\caption{Approximation Algorithm for $k$-DST}
\label{alg:iter}
\end{algorithm}

Through a simple induction we are able to see that for every $t \in T$, there are $k$ edge-disjoint $rt$-paths in $(V,E_k)$.
This also implies that the edge set constructed by Algorithm~\ref{alg:iter} is a feasible solution to $k$-DST. The following lemma shows that the weight of this set is at most $O(k |S|\log|T|)$ times the optimum.
\begin{lemma}
$w(E_k) \le O(k |S|\log|T|) \cdot \opt(G)$.
\end{lemma}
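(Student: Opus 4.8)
The plan is to charge the weight accumulated in each of the $k$ iterations of Algorithm~\ref{alg:iter} to $\opt(G)$ separately, and then add up. Fix an iteration index $\ell \in \{0,1,\ldots,k-1\}$ and let $F_\ell \subseteq E \setminus E_\ell$ be the augmenting set produced in line~\ref{l:connAug} during that iteration. Because $F_\ell$ is disjoint from $E_\ell = F_0 \cup \cdots \cup F_{\ell-1}$, the final output $E_k$ is the disjoint union of $F_0,\ldots,F_{k-1}$, so $w(E_k) = \sum_{\ell=0}^{k-1} w(F_\ell)$. By the approximation guarantee of Theorem~\ref{th:approxAugm}, writing $\opta(G,E_\ell)$ for the minimum weight of a feasible solution to the connectivity augmentation instance $(G, E_\ell)$, we have $w(F_\ell) \le O(|S|\log|T|)\cdot\opta(G,E_\ell)$.

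The key step --- and the only one with any content --- is to establish the monotonicity inequality $\opta(G,E_\ell) \le \opt(G)$ for every $\ell \le k-1$. Here I would fix an optimal $k$-DST solution $E^*$, i.e. a minimum-weight edge set such that $(V,E^*)$ has $k$ edge-disjoint $rt$-paths for every $t \in T$, and exhibit $E^* \setminus E_\ell$ as a feasible augmenting set for the instance $(G, E_\ell)$. Feasibility is immediate: $E_\ell \cup (E^* \setminus E_\ell) \supseteq E^*$, so this subgraph inherits the $k \ge \ell+1$ edge-disjoint $rt$-paths, hence $E^* \setminus E_\ell$ raises the rooted connectivity from $\ell$ to $\ell+1$ for every terminal; note this uses the invariant (established by the induction already noted above) that $(V,E_\ell)$ carries $\ell$ edge-disjoint $rt$-paths, so the augmentation instance is well-defined. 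Consequently $\opta(G,E_\ell) \le w(E^* \setminus E_\ell) \le w(E^*) = \opt(G)$, where the second inequality is nonnegativity of the edge weights.

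Putting the two bounds together gives $w(F_\ell) \le O(|S|\log|T|)\cdot\opt(G)$ uniformly in $\ell$, and summing over the $k$ iterations, $w(E_k) = \sum_{\ell=0}^{k-1} w(F_\ell) \le k\cdot O(|S|\log|T|)\cdot\opt(G) = O(k|S|\log|T|)\cdot\opt(G)$, which is the claim. I do not expect a genuine obstacle: the whole argument is the observation that an optimal $k$-DST solution is, after deleting the edges already bought, a witness for the augmentation problem at every stage, so the per-iteration optimum never exceeds $\opt(G)$; only the approximation factor of Theorem~\ref{th:approxAugm} is invoked, not its running-time or probabilistic parts. The mild points to be careful about are purely bookkeeping --- that the $F_\ell$ are pairwise disjoint so the weights genuinely add, and that each augmentation subinstance is feasible --- both of which follow from the structure of Algorithm~\ref{alg:iter}.
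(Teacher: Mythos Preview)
Your proof is correct and follows essentially the same approach as the paper's: exhibit $E^* \setminus E_\ell$ (the paper's $F^* \setminus E_\ell$) as a feasible solution to each augmentation instance, deduce $w(F_\ell) \le O(|S|\log|T|)\cdot\opt(G)$ from Theorem~\ref{th:approxAugm}, and sum over the $k$ iterations. The only cosmetic difference is that the paper verifies feasibility via the cut characterization (any $U \in \mathcal{U}_\ell$ receives at least $k-\ell \ge 1$ edges of $F^* \setminus E_\ell$) whereas you invoke the equivalent path-based statement $E_\ell \cup (E^* \setminus E_\ell) \supseteq E^*$.
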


\begin{proof}
Let $0 \le \ell \le k-1$. Consider an optimal solution $F^* \subseteq E$ to the $k$-DST problem on $G$. Then for every $U \in {\cal U}$, $|\varrho_{F*}(U) | \ge k$. Let $U \in {\cal U}$ be such that $|\varrho_{E_\ell}(U)| = \ell < k$. Then $|\varrho_{F^* \setminus E_\ell}(U)| \ge k - \ell \ge 1$.
Therefore $F^*\setminus E_\ell$ is a feasible solution for the augmentation problem defined by $G, E_{\ell}$. We conclude that the set $F$ constructed in line~\ref{l:connAug} satisfies $w(F) \le O(|S| \log |T|)\opt(G)$. The lemma follows.
\end{proof}

To conclude the proof of Theorem~\ref{th:main} it remains to show that the algorithm runs in polynomial time with high probability. Theorem~\ref{th:approxAugm} guarantees that line~\ref{l:connAug} always returns an approximate solution, and it runs in polynomial time with probability at least $1 - \tfrac{\log |T|}{2^{|E|}}$. Since Algorithm~\ref{alg:iter} performs $k$ iterations, we get by a union bound that Algorithm~\ref{alg:iter} runs in polynomial time with probability at least $1 - \tfrac{k\log |T|}{2^{|E|}}$, and since $k \log |T| \le |E|^2$, this probability is very high. 
The remainder of the paper is devoted to the proof of Theorem~\ref{th:approxAugm}.

\section{Increasing Rooted Connectivity By One}\label{sec:high}
In this section, we prove Theorem~\ref{th:approxAugm} by presenting an $O(|S|\log|T|)$-approximation algorithm for the connectivity augmentation problem.
Let $G = (\{r\}\cup S \cup T,E,w)$ be the instance graph, and $E_{\ell} \subseteq E$ be a set of edges such that in $(V,E_\ell)$ there are $\ell$ edge-disjoint $rt$-paths for every $t \in T$. Let 
${\cal F} \eqdef \{ U \subseteq V \setminus \{r\} : U \cap T \ne \emptyset \; and \; |\varrho_{E_\ell}(U)|=\ell \},$
then Proposition~\ref{p:tight} implies ${\cal F}$ is $T$-intersecting. By the min-cut max-flow theorem, a set $F \subseteq E \setminus E_{\ell}$ is a feasible solution for the augmentation problem if and only if $\varrho_F(U) \ne \emptyset$ for every $U \in {\cal F}$. The linear program below can be viewed as a fractional relaxation of the augmentation problem.
\begin{equation}
\arraycolsep=1.4pt\def\arraystretch{1.6}
\begin{array}{lll}
min &\sum_{e \in E}{w_ex_e}\qquad s.t. \qquad&\sum_{e \in \varrho_{E \setminus E_\ell}(U)}{x_e} \ge 1 \quad \quad  \forall U \in {\cal F},\\
& & x_e \ge 0 \quad \quad \forall e \in E
\end{array} \; .
\tag{LP}
\label{eq:LP}
\end{equation}
Fix some $x \in \mathbb{R}_+^E$. For every $e \in E$, we think of $x_e$ as the capacity of $e$ in $G$. Define $\hat{x} \in \mathbb{R}_+^E$ by $\hat{x}_e = 1$ for all $e \in E_\ell$ and $\hat{x}_e=x_e$ otherwise. Then $x$ is a feasible solution for \eqref{eq:LP} if and only if the capacity of a minimum $rt$-cut in $(G,\hat{x})$ is at least $\ell+1$ for every $t \in T$. Therefore, the feasibility of $x$ can be verified in time polynomial in the size of $G$. Moreover, if $x$ is not feasible, there is some $t \in T$ such that in $(G,\hat{x})$ there is an $rt$-cut of capacity less than $\ell+1$. Such a cut $U \subseteq V$ is a violated constraint and is found in polynomial time.
The program \eqref{eq:LP} can therefore be solved efficiently using a separation oracle. With high probability we thus efficiently obtain an $O(\log|{\cal F}|)$-approximate solution for the connectivity augmentation problem. However, in general $\log|{\cal F}| = \Omega(|V|)$, hence the approximation factor might be too large, since obtaining approximation of factor $|T|$ times the optimal value is trivial (increase the $rt$-connectivity for every $t \in T$ separately). Note that we did not use any properties of ${\cal F}$, other than finding an efficient separation oracle for \eqref{eq:LP}. Specifically, we did not use the structural properties of $T$-intersecting families. 

In what follows, we present a mechanism for covering the $T$-intersecting family ${\cal F}$ by edges.
Loosely speaking, we iteratively prune ${\cal F}$ by covering small sub-families. 
For each such sub-family we efficiently find a small weight cover. By choosing the sub-families carefully, the process ends after a small number of iterations. More formally, we look for sub-families ${\cal F}_1,\ldots,{\cal F}_t \subseteq {\cal F}$ such that (i) if $F \subseteq E \setminus E_\ell$ covers ${\cal F}_1,\ldots,{\cal F}_t$ then $F$ covers ${\cal F}$; (ii) for every $j \in [t-1]$, given a cover $F \subseteq E \setminus E_\ell$ for ${\cal F}_1,\ldots,{\cal F}_j$, we can efficiently find a small-weight cover $F' \subseteq E \setminus E_\ell$ for ${\cal F}_{j+1}$; and (iii) $t$ is small.

We start by exhibiting several fundamental properties of ${\cal F}$.
Given a set $F \subseteq E \setminus E_\ell$, a set $X \in {\cal F}$ is called {\em $F$-tight} if $F$ does not cover $X$, i.e. $\varrho_F(X) = \emptyset$. Denote by ${\cal F}^F$ the family of $F$-tight elements of ${\cal F}$. A crucial property of ${\cal F}$ is that the structural property of being $T$-intersecting is, in a sense, hereditary.
In particular it is easy to prove that ${\cal F}^F$ is $T$-intersecting as follows.
Let $X,Y \in {\cal F}^F$ be such that $X \cap Y \cap T \ne \emptyset$. Then $X \cap Y, X \cup Y \in {\cal F}$. Furthermore, by submodularity of the cut function,
$0 = |\varrho_F(X)| + |\varrho_F(Y)| \ge |\varrho_F(X \cup Y) | + |\varrho_F(X \cap Y)| \ge 0 \;.$ 
Therefore $|\varrho_F(X \cup Y) | = |\varrho_F(X \cap Y)| = 0$, and thus $X \cap Y$ and $X \cup Y$ are $F$-tight elements of ${\cal F}$, that is $X \cup Y, X \cap Y \in {\cal F}^F$.

The following definition, and the two lemmas that follow it are inspired by a similar observation by Nutov \cite{Nutov09} for problems regarding undirected graphs.
The proofs of Lemmas \ref{l:partition},\ref{l:half} are moved to the Appendix due to space restrictions.
\begin{definition}\label{def:core}
Let ${\cal M}({\cal F}^F)$ denote the family of inclusion-minimal elements of ${\cal F}^F$. A set $X \in {\cal F}^F$ is called an {\em ${\cal F}^F$-core} if it contains exactly one inclusion-minimal element of ${\cal F}^F$.
\end{definition}
\begin{lemma} \label{l:partition}
Let $X \in {\cal F}^F$. Then for every $C \in {\cal M}({\cal F}^F)$, either $C \cap X \cap T = \emptyset$ or $C \subseteq X$. In particular, for every distinct $C,D \in {\cal M}({\cal F}^F)$, $C \cap D \cap T  = \emptyset$.
\end{lemma}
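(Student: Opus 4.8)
The plan is to rely on just two ingredients: the fact, established immediately above, that ${\cal F}^F$ is itself $T$-intersecting, and the defining property that each $C \in {\cal M}({\cal F}^F)$ is inclusion-minimal in ${\cal F}^F$. Everything reduces to applying the closure property of $T$-intersecting families to the pair $C$ and $X$.

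First I would fix $X \in {\cal F}^F$ and $C \in {\cal M}({\cal F}^F)$, and suppose $C \cap X \cap T \ne \emptyset$; the goal is then to deduce $C \subseteq X$. Since $C, X \in {\cal F}^F$ and $C \cap X \cap T \ne \emptyset$, the $T$-intersecting property of ${\cal F}^F$ gives $C \cap X \in {\cal F}^F$. But $C \cap X \subseteq C$, and $C$ is inclusion-minimal in ${\cal F}^F$, so we must have $C \cap X = C$, i.e.\ $C \subseteq X$. This proves the dichotomy: either $C \cap X \cap T = \emptyset$, or $C \subseteq X$.

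For the ``in particular'' clause, I would take distinct $C, D \in {\cal M}({\cal F}^F)$ and argue by contradiction, assuming $C \cap D \cap T \ne \emptyset$. Applying the dichotomy just proved with the set $X \eqdef D$ (valid since $D \in {\cal F}^F$) forces $C \subseteq D$; applying it with the roles of $C$ and $D$ interchanged forces $D \subseteq C$. Hence $C = D$, contradicting the assumption that $C$ and $D$ are distinct. Therefore $C \cap D \cap T = \emptyset$.

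I do not expect a genuine obstacle here: the argument is two applications of the closure property plus minimality. The only point that needs care — and the reason the statement is phrased with $T$ in the intersection — is that to invoke closure we need $C \cap X$ to meet $T$, not merely to be nonempty, which is exactly the hypothesis $C \cap X \cap T \ne \emptyset$; this is why the conclusion naturally comes out as a statement about $C \cap X \cap T$ rather than $C \cap X$.
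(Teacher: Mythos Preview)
Your proposal is correct and matches the paper's own proof essentially verbatim: use the $T$-intersecting property of ${\cal F}^F$ to get $C \cap X \in {\cal F}^F$, then inclusion-minimality of $C$ forces $C \cap X = C$. The ``in particular'' clause, which the paper leaves implicit, you handle exactly as intended by applying the first part symmetrically to the pair $C,D$.
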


\begin{lemma} \label{l:half}
If $F' \subseteq E \setminus E_\ell$ covers all ${\cal F}^F$-cores. Then $|{\cal M}({\cal F}^{F \cup F'})| \le \frac{1}{2}|{\cal M}({\cal F}^F)|$.
\end{lemma}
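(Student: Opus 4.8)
The plan is to show that if $F'$ covers every ${\cal F}^F$-core, then when we pass to ${\cal F}^{F\cup F'}$, each newly surviving minimal set ``absorbs'' at least two old minimal sets, so the count of minimal sets at least halves. First I would set $G' := F \cup F'$, and observe that ${\cal F}^{G'} \subseteq {\cal F}^F$ (covering more sets leaves fewer tight), and that ${\cal F}^{G'}$ is again $T$-intersecting by the hereditary argument already spelled out in the excerpt. Now fix any $C' \in {\cal M}({\cal F}^{G'})$. Since $C' \in {\cal F}^{G'} \subseteq {\cal F}^F$, by Lemma~\ref{l:partition} the minimal sets of ${\cal F}^F$ that meet $C'$ in $T$ are actually contained in $C'$; call this subfamily ${\cal M}_{C'} \subseteq {\cal M}({\cal F}^F)$. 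The crux is to prove $|{\cal M}_{C'}| \ge 2$ for every $C' \in {\cal M}({\cal F}^{G'})$.

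The key step: suppose toward a contradiction that $|{\cal M}_{C'}| \le 1$. Since $C' \in {\cal F}^F$, it contains at least one inclusion-minimal element of ${\cal F}^F$ (take any minimal-by-inclusion subset of $C'$ lying in ${\cal F}^F$; it meets $T$ and meets $C'$ in $T$), so $|{\cal M}_{C'}| = 1$, i.e.\ $C'$ contains exactly one inclusion-minimal element of ${\cal F}^F$. That is precisely the definition of $C'$ being an ${\cal F}^F$-core. But then, by hypothesis, $F'$ covers $C'$, so $\varrho_{F'}(C') \ne \emptyset$, hence $\varrho_{G'}(C') \ne \emptyset$, contradicting $C' \in {\cal F}^{G'}$. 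Therefore $|{\cal M}_{C'}| \ge 2$.

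To finish, I would argue that the subfamilies $\{{\cal M}_{C'}\}_{C' \in {\cal M}({\cal F}^{G'})}$ are pairwise disjoint as subsets of ${\cal M}({\cal F}^F)$: if $D \in {\cal M}_{C_1'} \cap {\cal M}_{C_2'}$ for distinct $C_1', C_2' \in {\cal M}({\cal F}^{G'})$, then $D \subseteq C_1' \cap C_2'$ and $D \cap T \ne \emptyset$, so $C_1' \cap C_2' \cap T \ne \emptyset$; since ${\cal F}^{G'}$ is $T$-intersecting, $C_1' \cap C_2' \in {\cal F}^{G'}$ is a proper subset of both, contradicting inclusion-minimality of $C_1'$ (and $C_2'$). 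Consequently
\[
|{\cal M}({\cal F}^{F})| \;\ge\; \sum_{C' \in {\cal M}({\cal F}^{G'})} |{\cal M}_{C'}| \;\ge\; 2\,|{\cal M}({\cal F}^{G'})| \;=\; 2\,|{\cal M}({\cal F}^{F \cup F'})|,
\]
which is the claimed bound. I expect the main obstacle to be the careful bookkeeping in the contradiction step — making sure that ``$C'$ contains exactly one minimal element of ${\cal F}^F$'' genuinely matches Definition~\ref{def:core} of an ${\cal F}^F$-core (in particular that $C' \in {\cal F}^F$, which holds since ${\cal F}^{G'} \subseteq {\cal F}^F$), and in verifying the disjointness of the ${\cal M}_{C'}$ rests only on the $T$-intersecting property rather than on ordinary intersection.
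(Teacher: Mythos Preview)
Your proposal is correct and follows essentially the same approach as the paper: each $C' \in {\cal M}({\cal F}^{F\cup F'})$ lies in ${\cal F}^F$, cannot be an ${\cal F}^F$-core (else $F'$ would cover it), hence contains at least two elements of ${\cal M}({\cal F}^F)$, and these subfamilies are pairwise disjoint by the $T$-intersecting property. Your write-up is in fact slightly more careful than the paper's, which asserts that the $\{{\cal X}_C\}$ form a \emph{partition} of ${\cal M}({\cal F}^F)$ when only pairwise disjointness is actually established (and is all that is needed).
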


Lemma~\ref{l:half} implies that by iteratively covering the sub-family of tight cores, we cover ${\cal F}$ after at most $\log |{\cal M}({\cal F})|$ iterations. Since $|{\cal M}({\cal F})| \le |T|$, we conclude that the process terminates after at most $\log |T|$ iterations. 
The next section constitutes the technical crux in the proof of Theorem~\ref{th:approxAugm}.

\subsection{Covering Cores Cheaply and Efficiently}\label{sec:covCor}
In this section we conclude proof of Theorem~\ref{th:approxAugm} by giving an $O(|S|)$-approximation algorithm for the problem of covering the family of ${\cal F}^F$-cores. A na{\"i}ve approach suggested by the preceding discussion is to simply apply the implicit hitting set approximation algorithm to cover the family of ${\cal F}^F$-cores. 
However, the sub-family of ${\cal F}^F$-cores can be almost as large as the entire ${\cal F}^F$.
Hence the approximation factor might be as large as $\Omega(|V|)$. We therefore refine the definition of cores. This is where we diverge from \cite{Nutov09}, and from all previous work.

\begin{definition}
A set $X \in {\cal F}^F$ is called a {\em strict ${\cal F}^F$-core} if there exists $C \in {\cal M}({\cal F}^F)$ such that $X \cap T = C \cap T$, that is, $X$ and $C$ have the same set of terminals.
\end{definition}

One can easily verify, using Lemma~\ref{l:partition} that every strict ${\cal F}^F$-core contains exactly one inclusion-minimal set of ${\cal F}^F$. Thus every strict ${\cal F}^F$-core is also an ${\cal F}^F$-core.
%
%
As opposed to the family of tight cores, the family of strict cores is significantly small.
\begin{lemma}\label{l:approxValue}
The number of strict ${\cal F}^F$-cores is at most $2^{|S|} \cdot \left|T\right|$.
\end{lemma}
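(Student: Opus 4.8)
The plan is to bound the number of strict ${\cal F}^F$-cores by counting the distinct terminal-sets they can have and then the distinct cores sharing a given terminal-set. First I would observe that by Lemma~\ref{l:partition}, the inclusion-minimal sets in ${\cal M}({\cal F}^F)$ have pairwise disjoint intersections with $T$; hence there are at most $|T|$ of them (each contains at least one terminal, and these terminal-blocks are disjoint). For each strict ${\cal F}^F$-core $X$ there is, by definition, a (unique, by Lemma~\ref{l:partition}) $C \in {\cal M}({\cal F}^F)$ with $X \cap T = C \cap T$. So the map $X \mapsto C$ partitions the strict cores into at most $|T|$ classes, one per inclusion-minimal set, and it suffices to show each class has at most $2^{|S|}$ members.

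Next I would fix an inclusion-minimal set $C$ and consider all strict cores $X$ with $X \cap T = C \cap T$. Such an $X$ is determined by $(X \cap T, X \cap S, X \cap \{r\})$; since $r \notin X$ for every $X \in {\cal F} \supseteq {\cal F}^F$ (all sets in ${\cal U}$, hence in ${\cal F}$, avoid the root), and $X \cap T = C \cap T$ is fixed, $X$ is completely determined by $X \cap S$. There are at most $2^{|S|}$ possibilities for $X \cap S$, so at most $2^{|S|}$ strict cores map to $C$. Combining with the bound of $|T|$ on the number of classes gives at most $2^{|S|} \cdot |T|$ strict ${\cal F}^F$-cores, as claimed.

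The argument is essentially a counting one, so there is no deep obstacle; the only point requiring a little care is the injectivity claim in the second paragraph — that two distinct strict cores with the same terminal-set and the same $S$-part cannot both lie in ${\cal F}^F$ — which follows because a subset of $V \setminus \{r\}$ is determined by its intersections with the partition $\{T, S\}$ of $V \setminus \{r\}$. One should also double-check that every strict core indeed avoids $r$; this is immediate since ${\cal F}^F \subseteq {\cal F} \subseteq {\cal U} = \{U \subseteq V \setminus \{r\} : U \cap T \ne \emptyset\}$. With these observations in place the proof is short.
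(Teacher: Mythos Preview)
Your proof is correct and takes essentially the same approach as the paper: both argue that a strict ${\cal F}^F$-core $X$ is determined by the unique $C \in {\cal M}({\cal F}^F)$ with $X \cap T = C \cap T$ together with its Steiner part, giving at most $|{\cal M}({\cal F}^F)| \cdot 2^{|S|} \le |T| \cdot 2^{|S|}$ strict cores. The paper phrases this via the containment $X \setminus C \subseteq S$, writing each strict core as $C \cup X'$ with $X' \subseteq S$, whereas you use the partition $V \setminus \{r\} = T \cup S$ directly; the two parametrizations are equivalent.
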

\begin{proof}
Let $X$ be a strict ${\cal F}^F$-core. Then there is a unique $C \in {\cal M}({\cal F}^F)$ such that $C \subseteq X$ and $C \cap T = X \cap T$. Therefore $X \setminus C \subseteq S$, and
hence the family of strict ${\cal F}^F$-cores is contained in $\{C \cup X' : C \in {\cal M}({\cal F}^F),\; X' \in 2^S \}$.
\end{proof}
By Lemma~\ref{l:approxValue}, if we can model the problem of covering all strict ${\cal F}^F$-cores as an implicit hitting set problem, the approximation factor of the randomized rounding algorithm is reduced to $O(|S|+\log|T|)=O(|S|)$.
However,
if $F' \subseteq E \setminus E_\ell$ covers all strict cores, it does not necessarily cover all cores. Nevertheless, not all is lost. 

In a sense, the essence of the inadequacy of covering only strict cores lies in the following scenario. 
Consider an edge $e$ leaving a covered core $X$ and entering a non-covered strict core $C$ covers the strict core, but does not reduce the number of non-covered minimal cores. Covering all strict cores with such edges attains no advancement to the algorithm. We thus turn to construct an auxiliary graph, in which this scenario cannot occur.


More formally, given $G, E_\ell$ and $F$, let $\opta(G, E_\ell)$ denote the weight of an optimal solution for the augmentation problem. We show that we can construct a new graph $G^F$ on the same vertex set $V$, in which (i) we can find in $G^F$ an edge set $A \subseteq E(G^F)$ covering all ${\cal F}^F$-cores such that $w(A) \le O(|S|)\opta(G,E_\ell)$; and (ii) given a cover $A \subseteq E(G^F)$ for the family of ${\cal F}^F$-cores, we can construct a cover $F' \subseteq E \setminus E_\ell$ of the ${\cal F}^F$-cores in $G$ such that $w(F')\le w(A)$. Moreover, the construction of $G^F$, as well as (i) and (ii), can be done in polynomial time with high probability. Note that unlike simple reductions commonly used in dealing with the directed Steiner tree problem (e.g. reducing the instance to an acyclic graph or a layered graph), the construction we present preserves the vertex set intact, and specifically does not change $|S|$. 

%

Following the discussion above, we want the weights to satisfy the triangle inequality, and therefore we first define a special form of metric completion of $G$. In classical literature, for every $u,v \in V$, the weight of the edge $uv$ in the metric completion of a weighted graph $G$ is defined to be the length of a shortest $uv$-path in $G$. By the minimum-cut maximum-flow theorem, for every $u,v \in V$, a shortest $uv$-path in $G$ is a minimum-weight edge-cover of the set $\{U \subseteq V \setminus \{u\} : u \in U\}$. Therefore an analogous way of viewing the definition of a metric completion is the following. For every $u,v \in V$, the weight of the edge $uv$ in the metric completion of $G$ is defined to be the minimum weight of an edge cover for the set $\{U \subseteq V \setminus \{u\} : u \in U\}$. In our setting not all such subsets $U \subseteq V \setminus \{u\}$ need to be covered. We therefore wish to refine the classical notion of a metric completion of $G$. Specifically, if every $X \in {\cal F}^F$ such that $v \in X$ satisfies $u \in X$, then in a sense we do not need to cover any $uv$-cut, since all {\em relevant} cuts (i.e. those belonging to ${\cal F}^F$) have been covered. We can therefore set the weight of the edge $uv$ to zero.

Formally we define a graph $G^0 = (V, (E \setminus E_\ell) \cup E^0, w^0)$ as $E^0 = \{uv: \text{for every $X \in {\cal F}^F$, if $v \in X$ then $u \in X$}\} \;$ and constructed through the following procedure.
For every $e \in (E \setminus E_\ell) \cup E^0$, we define $w^0_e=0$ if $e \in F \cup E^0$, and $w^0_e = w_e$ otherwise.
Next, let $G^1$ be the standard metric completion of $G^0$. That is, $G^1$ is the complete directed graph on $V$, where $w^1_{uv}$ is the length of a shortest $uv$-path in $G^0$. The weight assignment $w^1$ satisfies the triangle inequality, and is therefore an asymmetric metric.

Next, following the discussion above, we construct $G^F$ by removing from $G^1$ edges, which we can assert will not belong to any inclusion-minimal solution. Formally, $G^F$ is the subgraph of $G^1$ constructed as follows. For every $s \in \bigcup_{C \in {\cal M}({\cal F}^F)}{(C \cap T)}$ and for every $u \in V \setminus \left(\bigcup_{C \in {\cal M}({\cal F}^F)}{(C \cap T)} \right)$, if every $X \in {\cal F}^F$ that contains $s$ also contains $u$, remove from $G^1$ all edges outgoing from $u$. The proof of the claim that $G^F$ can be constructed in polynomial time given $G,E_\ell, F$ is deferred to Appendix~\ref{a:proofs}.
%

\begin{lemma} \label{l:reduce}
For every $A \subseteq E(G^F)$, if $A$ covers all ${\cal F}^F$-cores, we can construct in polynomial time an edge set $F' \subseteq E \setminus E_\ell$ that covers all ${\cal F}^F$-cores such that $w(F') \le w^1(A)$.
\end{lemma}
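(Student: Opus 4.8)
The plan is to unwind the two-stage construction $G \rightsquigarrow G^0 \rightsquigarrow G^1$ (and then $G^1 \supseteq G^F$) and show how a covering edge set $A \subseteq E(G^F)$ can be "pulled back" to an edge set $F' \subseteq E \setminus E_\ell$ in the original graph without increasing weight and without losing the covering property for ${\cal F}^F$-cores. First I would handle the passage from $G^1$ to $G^F$: since $G^F$ is a subgraph of $G^1$ on the same vertex set, any $A \subseteq E(G^F)$ is also a subset of $E(G^1)$, and $w^1$ restricted to these edges is unchanged, so it suffices to work with $A \subseteq E(G^1)$ and produce $F' \subseteq E \setminus E_\ell$ with $w(F') \le w^1(A)$ that still covers all ${\cal F}^F$-cores.

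Next I would replace each metric-completion edge $uv \in A$ by a corresponding shortest $uv$-path $P_{uv}$ in $G^0$, whose total $w^0$-weight equals $w^1_{uv}$ by definition of $G^1$; let $B \subseteq (E \setminus E_\ell) \cup E^0$ be the union of the edges on all these paths, so $w^0(B) \le w^1(A)$. Then I would discard the two kinds of zero-weight edges that $B$ may use. Edges $e \in F$ may be dropped because $F$-tight sets are by definition not covered by $F$, so removing them cannot help or hurt the covering of elements of ${\cal F}^F$ — more carefully, one argues that if $B$ covers $X \in {\cal F}^F$ then $B \setminus F$ still covers $X$, since every edge of $F$ that $B$ uses fails to enter $X$ (as $X$ is $F$-tight). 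Edges $e = uv \in E^0$ may be contracted/bypassed: by the defining property of $E^0$, for every $X \in {\cal F}^F$ with $v \in X$ we also have $u \in X$, hence such an edge never enters any $X \in {\cal F}^F$ and in particular never enters an ${\cal F}^F$-core; so an $rt$-style path through $uv$ can be rerouted to "skip" $uv$ while still crossing into any core it needs to cross. After these two reductions we obtain $F' \subseteq E \setminus E_\ell$ with $w(F') = w(F' ) \le w^0(B) \le w^1(A)$ (using that the surviving edges are genuine edges of $E \setminus E_\ell$ on which $w^0$ agrees with $w$).

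The remaining point is to verify that $F'$ still covers every ${\cal F}^F$-core $X$. I would argue this by a cut/crossing argument: $A$ covers $X$ in $G^F$, i.e. some edge $uv \in A$ enters $X$; the associated path $P_{uv}$ in $G^0$ starts outside $X$ (at $u \notin X$) — here one must check $u \notin X$, which follows precisely because edges into $X$ were not deleted in forming $E^0$, i.e.\ $uv \notin E^0$ when $uv$ enters some $X \in {\cal F}^F$ — and ends at $v \in X$, so $P_{uv}$ crosses the cut $\delta(X)$; some crossing edge of $P_{uv}$ lies in $E \setminus E_\ell$ (it cannot be in $F$ since $X$ is $F$-tight, and it cannot be in $E^0$ since $E^0$-edges do not enter elements of ${\cal F}^F$), hence survives into $F'$ and covers $X$. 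Summing the weight bounds over all paths and using $w^0 \le w$ on the surviving edges gives $w(F') \le w^1(A)$, and polynomial running time is immediate since shortest paths in $G^0$ and the two pruning steps are all polynomial-time operations.

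The main obstacle I anticipate is the bookkeeping around the zero-weight edges in $E^0$ and in $F$: one has to be careful that rerouting a shortest path to avoid an $E^0$-edge (or an $F$-edge) does not destroy the property that the path still enters $X$, and that this works simultaneously for \emph{every} core $X$ that the original edge $uv$ was responsible for covering. The clean way to do this is to prove the two structural facts once and for all — (a) no edge of $E^0$ enters any $X \in {\cal F}^F$, and (b) for $X$ that is $F$-tight, no edge of $F$ enters $X$ — and then observe that deleting such edges from any path between $u \notin X$ and $v \in X$ leaves a walk that still has a first vertex outside $X$ and last vertex inside $X$, hence still contains an edge of $(E\setminus E_\ell)\setminus E^0$ crossing into $X$.
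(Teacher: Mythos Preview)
Your proposal is correct and follows essentially the same route as the paper: replace each edge $uv\in A$ by a shortest $uv$-path $P_{uv}$ in $G^0$, take $F'=\bigcup_{uv\in A}\bigl(P_{uv}\cap (E\setminus(E_\ell\cup F))\bigr)$, and argue that any crossing edge of $P_{uv}$ into an ${\cal F}^F$-core cannot lie in $F$ (by $F$-tightness) or in $E^0$ (by definition of $E^0$), so it survives into $F'$; the weight bound is immediate since the discarded edges have $w^0$-weight zero. Your language about ``rerouting'' and about the remainder being a ``walk'' is unnecessary---no rerouting is needed, you simply drop edges and use the crossing argument you state in your last paragraph---but the argument itself is the paper's.
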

\begin{proof}
Let $A \subseteq E(G^F)$ be an ${\cal F}^F$-cores cover in $G^F$, $e = uv \in A$, and $P_e \subseteq (E \setminus E_\ell) \cup E^0$ be a shortest $uv$-path in $G^0$. Then $w^1_e = w^0(P_e) = \sum_{e' \in P_e}{w^0_{e'}} =\sum_{e' \in P_e \cap (E \setminus (E_\ell \cup F))}{w_{e'}}$.
Let $X \in {\cal F}^F$ be an ${\cal F}^F$-core covered by $e$. Then $u \notin X$ and $v \in X$. Hence there is an edge $e' = u'v' \in P_e$ that covers $X$. Since $X \in {\cal F}^F$ and $v' \in X$ and $u' \notin X$, it follows that $e' = u'v' \notin F \cup E^0$, and hence $e' \in E \setminus (E_\ell \cup F)$.
We conclude that $F' \eqdef \bigcup_{e \in A}{(P_e \cap (E \setminus (E_\ell \cup F))}$ covers all ${\cal F}^F$-cores, and
$w(F') = \sum_{e \in F'}{w_e} \le \sum_{e \in A}{\sum_{e' \in P_e \cap (E \setminus (E_\ell \cup F))}{w_{e'}}} = \sum_{e \in A}{w^1_e}= w^1(A).$
\end{proof}
Due to Lemma~\ref{l:reduce}, in order to find a low-cost cover $F' \subseteq E \setminus E_\ell$ for all ${\cal F}^F$-cores it suffices to find a low-cost cover $A \subseteq E(G^F)$. The following two lemmas show that such a cover exists, and that it is enough for $A$ to cover the strict ${\cal F}^F$-cores. 

\begin{lemma} \label{l:strictPossible}
There exists an edge set $A \subseteq E(G^F)$ such that $A$ covers all ${\cal F}^F$-cores and $w^1(A) \le \opta(G,E_\ell)$.
\end{lemma}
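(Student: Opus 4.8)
The plan is to start from an optimal augmentation solution and push it into $G^F$ without increasing its weight. Let $F^*\subseteq E\setminus E_\ell$ be optimal for the augmentation problem on $G,E_\ell$, so $w(F^*)=\opta(G,E_\ell)$ and $\varrho_{F^*}(U)\neq\emptyset$ for every $U\in{\cal F}$; in particular $F^*$ covers every ${\cal F}^F$-core. Since $F^*\subseteq E\setminus E_\ell\subseteq E(G^0)$ and $w^0_e\le w_e$ for every edge of $G^0$, regarding $F^*$ as a set of edges of the metric completion $G^1$ we get $w^1_{uv}\le w^0_{uv}\le w_{uv}$ for each $uv\in F^*$, hence $w^1(F^*)\le w(F^*)=\opta(G,E_\ell)$; and since ``covers'' is a purely combinatorial relation on the fixed vertex set $V$, $F^*$ still covers all ${\cal F}^F$-cores when viewed inside $G^1$. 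Thus if $F^*\subseteq E(G^F)$ we are done, and the only obstruction is the edges $uv\in F^*$ whose tail $u$ had \emph{all} its outgoing edges removed in the passage from $G^1$ to $G^F$.

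The second step reroutes such a ``bad'' edge $e=uv$. Fix a shortest $uv$-path $P$ in $G^0$ (of weight $w^1_{uv}$), and let $q$ be the last vertex on $P$ such that the portion of $P$ from $u$ to $q$ uses only $w^0$-weight-zero edges, i.e.\ edges of $F\cup E^0$. Replace $e$ by the $G^1$-edge $qv$. The portion of $P$ from $q$ to $v$ is a $qv$-walk of weight $w^1_{uv}$, so $w^1_{qv}\le w^1_{uv}$. Moreover every edge $p'p''\in F\cup E^0$ satisfies ``$p''\in X\Rightarrow p'\in X$'' for every $X\in{\cal F}^F$ (by definition of $E^0$, and because no $F$-edge enters an $F$-tight set), so chaining along the prefix of $P$ shows that every $X\in{\cal F}^F$ containing $q$ contains $u$; hence for every ${\cal F}^F$-core $X$ covered by $e$ (so $u\notin X$, $v\in X$) we have $q\notin X$, and $qv$ still covers $X$. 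Carrying this out for all bad edges and letting $A$ be the resulting edge set of $G^1$, we obtain $w^1(A)\le w^1(F^*)\le\opta(G,E_\ell)$ and $A$ covers all ${\cal F}^F$-cores — \emph{provided} the vertex $q$ produced for each bad edge keeps its outgoing edges in $G^F$, so that $A\subseteq E(G^F)$.

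This last proviso is the main obstacle. Recall that the removal rule deletes the outgoing edges of a vertex $u\notin\bigcup_{C\in{\cal M}({\cal F}^F)}(C\cap T)$ exactly when $us\in E^0$, equivalently $w^0_{us}=0$, for some minimal-core terminal $s$; so one must argue that the deepest $w^0$-free-reachable vertex $q$ on a suitably chosen shortest $uv$-path is never itself such a deleted vertex (in particular $q\neq u$). This is where the refined structure must enter: using that $u$ admits a zero-weight edge to $s$ and that, by Lemma~\ref{l:partition}, every ${\cal F}^F$-core covered by $e$ is $T$-disjoint from the minimal core $C_s$ (which contains $u$), one should be able to push $q$ all the way onto $C_s\cap T\subseteq\bigcup_{C\in{\cal M}({\cal F}^F)}(C\cap T)$, whose vertices are never deleted, or else to perform the rerouting globally over all bad edges simultaneously; pinning this down is the technical heart of the lemma. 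The remaining claims — that the construction runs in polynomial time (with high probability) and leaves $|S|$ unchanged — are immediate from the definition of $G^F$.
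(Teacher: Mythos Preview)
Your overall strategy matches the paper's: start from an optimal augmentation, view it inside $G^1$ (where it is no more expensive), and then reroute each edge whose tail was deleted in the passage to $G^F$. The gap is exactly where you say it is, and it is not a detail: your construction of $q$ as the last zero-weight-reachable vertex on a fixed shortest $uv$-path gives no reason why $q$ itself avoids deletion. Nothing forces a shortest $uv$-path in $G^0$ to pass through any minimal-core terminal, so your $q$ may well equal $u$ (if the first edge of $P$ already has positive $w^0$-weight) or be some other deleted non-terminal; the chaining argument you give only establishes $q\in X\Rightarrow u\in X$ for $X\in{\cal F}^F$, which says nothing about whether $q$'s out-edges survive in $G^F$.

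The paper's fix bypasses shortest paths entirely. When $uv\notin E(G^F)$, the removal rule itself hands you a specific terminal $s\in\bigcup_{C\in{\cal M}({\cal F}^F)}(C\cap T)$ witnessing the deletion of $u$, and one replaces $uv$ directly by $sv$. Because $s$ is a minimal-core terminal, its outgoing edges are \emph{never} removed, so $sv\in E(G^F)$ automatically --- precisely the conclusion you could not secure for $q$. The weight bound then comes from the triangle inequality in the metric completion, $w^1_{sv}\le w^1_{su}+w^1_{uv}$, together with $w^1_{su}=0$; and the covering property for $X\in{\cal F}^F$ follows from the removal condition (contrapositive: $u\notin X\Rightarrow s\notin X$). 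The paper also maintains a slightly stronger invariant than you do --- it keeps the working set a full augmentation solution in $G^1$ and inducts on the number of edges outside $E(G^F)$ --- rather than merely a core cover. Your closing remark about ``pushing $q$ onto $C_s\cap T$'' is in fact pointing at exactly this move, but the proposal does not carry it out; the shortest-path detour is unnecessary once you use $s$ directly.
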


\begin{proof}
For every set $E' \subseteq E \setminus E_\ell$, from our construction we get that $w^1(E') \le w(E')$, and therefore $\opta(G^1,E_\ell) \le \opta(G,E_\ell)$. Hence, there is an edge set $A' \subseteq E(G^1)$ such that in the subgraph $(V,E_\ell \cup A')$ there are $\ell+1$ edge-disjoint $rt$-paths for all $t \in T$, and such that $w^1(A') \le \opta(G,E_\ell)$. We show that there is a set $A \subseteq E(G^F)$ such that in the subgraph $(V,E_\ell \cup A)$ there are $\ell+1$ edge-disjoint $rt$-paths for all $t \in T$ and $w^1(A) \le w^1(A')$. Clearly, such $A$ covers all strict ${\cal F}^F$-cores in $G^F$ and $w^1(A) \le \opta(G,E_\ell)$. We continue by induction on $|A' \setminus E(G^F)|$. If $|A' \setminus E(G^F)|=0$, take $A = A'$, and the result follows.
Otherwise, let $uv \in A' \setminus E(G^F)$. Then $u \in V \setminus \left(\bigcup_{C \in {\cal M}({\cal F}^F)}{(C \cap T)} \right)$ and there is $s \in \bigcup_{C \in {\cal M}({\cal F}^F)}{(C \cap T)}$ such that every $Z \in {\cal F}^F$ containing $u$ also contains $s$. By the definition of $A'$, every set $Z \in {\cal F}^F$ containing $s$ is covered by $A'$. 
Let $A'' = (A' \setminus \{uv\}) \bigcup \{sv\}$. By triangle inequality, $w^1_{sv} \le w^1_{su} + w^1_{uv}$. Since $su \in E^0$, $w^1_{su}=0$ we get that $w^1_{sv} \le w^1_{uv}$. Hence, $w^1(A'')\le w^1(A')$. 
Since $s \in \bigcup_{C \in {\cal M}({\cal F}^F)}{(C \cap T)}$, then $sv \in E(G^F)$, we get $|A'' \setminus E(G^F)| = |A' \setminus E(G^F)|-1$. 
\end{proof}

\begin{lemma} \label{l:strictEnough}
For every $A \subseteq E(G^F)$, $A$ covers all ${\cal F}^F$-cores if and only if $A$ covers all strict ${\cal F}^F$-cores.
\end{lemma}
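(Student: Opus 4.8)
The plan is to prove both directions. One direction is immediate: since every strict ${\cal F}^F$-core is an ${\cal F}^F$-core (as already observed right after the definition of strict core, using Lemma~\ref{l:partition}), if $A$ covers all ${\cal F}^F$-cores then in particular it covers all strict ${\cal F}^F$-cores. So the entire content is in the converse: assuming $A \subseteq E(G^F)$ covers all strict ${\cal F}^F$-cores, I must show it covers every ${\cal F}^F$-core. The natural strategy is proof by contradiction / extremal choice: suppose $A$ covers all strict cores but fails to cover some ${\cal F}^F$-core, and take $X$ to be an $A$-tight ${\cal F}^F$-core that is inclusion-minimal among all $A$-tight ${\cal F}^F$-cores. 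By Lemma~\ref{l:partition}, $X$ contains a unique inclusion-minimal set $C \in {\cal M}({\cal F}^F)$, and $C \subseteq X$. Since $X$ is not a strict core we have $C \cap T \subsetneq X \cap T$, so $X \setminus C$ meets $T$; more importantly $X$ must strictly contain some terminal not in $C$. Actually the key leverage point is that because $X$ is a minimal $A$-tight core, any ${\cal F}^F$-set properly ``between'' $C$ and $X$ that is still a core must be covered by $A$.

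The crux of the argument, and the step I expect to be the main obstacle, is to use the special structure built into $G^F$ — namely the removal of outgoing edges from vertices $u$ that are ``dominated'' by some terminal $s$ — to derive a contradiction. The intuition from the paragraph preceding the lemma is: an edge of $A$ leaving a covered core and entering an uncovered strict core wastes effort. Here we have the reverse situation: a core $X$ that is not strict, hence has ``extra'' terminals beyond its minimal set $C$. I would argue roughly as follows. Consider the set $C$ together with $X$; since $C \cap T \subsetneq X \cap T$, pick a terminal $s \in C \cap T$ and consider the subfamily of ${\cal F}^F$-sets containing $s$. By $T$-intersection, the intersection of all members of ${\cal F}^F$ containing $s$ (or rather the relevant uncrossing) is an ${\cal F}^F$-core $C_s$ with $C_s \cap T = C \cap T$ — i.e.\ a \emph{strict} core — which by hypothesis is covered by some $e = uv \in A$ with $v \in C_s$, $u \notin C_s$. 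Now I need to promote this edge (or a related edge) to a cover of $X$. The danger is exactly that $u \in X$ (the edge enters $X$'s minimal part but doesn't cross $X$'s boundary), i.e.\ $e$ is ``internal'' to $X$. But then $u \in X$, $u \notin C_s$, and $C_s \cap T = C \cap T$; this means $u$ is a vertex lying in $X$ but in no minimal core's terminal part, and every ${\cal F}^F$-set containing $u$... here is where the $G^F$-pruning rule must bite: either such a $u$ would have had its outgoing edges removed (so $e \notin E(G^F)$, contradiction), or $u$ is not dominated by $s$, meaning some ${\cal F}^F$-set $Z$ contains $u$ but not $s$ — and uncrossing $Z$ with the family defining $C_s$, or with $X$, should produce a smaller $A$-tight core, contradicting minimality of $X$.

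So the skeleton of the write-up is: (1) forward direction by the ``strict core $\Rightarrow$ core'' observation; (2) for the converse, assume failure and take an inclusion-minimal $A$-tight ${\cal F}^F$-core $X$ with minimal set $C \subsetneq X$, $C \cap T \subsetneq X \cap T$; (3) identify a strict core $C'$ with $C' \cap T = C \cap T$, covered by an edge $e = uv \in A \subseteq E(G^F)$; (4) since $A$ does not cover $X$ but covers $C'$, analyze the position of $u,v$ relative to $X$ — necessarily $u \in X$ (else $e$ covers $X$); (5) use that $e \in E(G^F)$ means $u$ is \emph{not} dominated by any terminal in $\bigcup_{C\in{\cal M}({\cal F}^F)}(C\cap T)$, in particular not by a terminal of $C$, so there is $Z \in {\cal F}^F$ with $u \in Z$ and $C \cap T \not\subseteq Z$; (6) uncross $Z$ with $X$ (they share the terminal-part of $C$... or intersect $Z$ with $X$ using that $u \in X \cap Z$) via $T$-intersection and Lemma~\ref{l:partition} to obtain an $A$-tight ${\cal F}^F$-core strictly smaller than $X$, contradicting minimality. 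Steps (5)–(6) are the delicate part: getting the uncrossing to land on a \emph{core} (unique minimal set) and to be \emph{strictly smaller}, rather than merely an $A$-tight set, is where the argument can slip, and I would be careful to invoke Lemma~\ref{l:partition} to control which minimal sets survive the uncrossing and to track that the terminal sets shrink appropriately.
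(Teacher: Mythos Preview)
Your forward direction and the overall shape of the converse are fine, but steps~(5)--(6) do not go through as written, and the gap is structural rather than a matter of care in the uncrossing. The problem is your choice of strict core in step~(3). Taking $C' = C_s = C$ (the inclusion-minimal set) means the covering edge $uv \in A$ satisfies only $u \notin C$; so $u$ may lie anywhere in $X \setminus C$, and in particular $u$ may be a Steiner node. In that case $T$-intersection gives you no control over $\{Z \in {\cal F}^F : u \in Z\}$, and the uncrossing in step~(6) fails outright: for $s \in C \cap T$, any $Z \in {\cal F}^F$ with $u \in Z$ and $s \notin Z$ has $Z \cap X \cap T = \emptyset$ (else $Z \cap X \in {\cal F}^F$ would contain a minimal set, necessarily $C$ since $Z \cap X \subseteq X$, contradicting $s \notin Z$). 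So you cannot uncross $Z$ with $X$ at all; and even if you could, nothing makes the result $A$-tight, so the minimality contradiction does not fire.

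The paper's argument avoids the extremal framing entirely. For an arbitrary ${\cal F}^F$-core $X$ with minimal set $C$, it sets $X' \eqdef X \setminus (T \setminus C)$ and checks directly that $X' \in {\cal F}^F$ (using that terminals have out-degree~$0$ in $G$, so deleting terminals from $X$ introduces no new incoming edges) and that $X' \cap T = C \cap T$, so $X'$ is a strict core contained in $X$. An edge $uv \in A$ covering $X'$ then has $v \in X' \subseteq X$ and $u \notin X'$; if $u \in X$ then $u \in X \setminus X' \subseteq T$, and such a terminal $u$ has all its outgoing edges removed in the construction of $G^F$, contradicting $uv \in E(G^F)$. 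The key move you are missing is to take the \emph{largest} strict core inside $X$ (namely $X'$), not the smallest: this is precisely what forces $u$ into $T$ and makes the $G^F$-pruning rule apply.
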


\begin{proof}
Let $A \subseteq E(G^F)$. Clearly, if $A$ covers all ${\cal F}^F$-cores, then it covers all strict ${\cal F}^F$-cores. Assume, therefore, that $A$ covers all strict ${\cal F}^F$-cores. Let $X \in {\cal F}^F$ be an ${\cal F}^F$-core that is not a strict core. There is a unique $C \in {\cal M}({\cal F}^F)$ such that $C \subseteq X$. Let $X' = X \setminus (T \setminus C)$.
%
First note that since for every $u \in T \setminus C$, no edge leaves $u$, then by removing from $X$ all terminals in $T \setminus C$, no new edges enter $X'$. Therefore $X' \in {\cal F}$.
To see that $X' \in {\cal F}^F$, let $xy \in F$ be some edge, and assume $y \in X' \subseteq X$. 
Assume $y \in C$. Since $C \in {\cal F}^F$, then $x \in C \subseteq X'$. Otherwise, since $x \notin T$, it follows that $x \notin X \setminus X'$. Since $X \in {\cal F}^F$, it follows that $x \in X$, and thus $x \in X'$. Therefore $xy$ does not cover $X'$, and thus $X' \in {\cal F}^F$.
Next we note that $X' \cap T = C \cap T$, and thus $X'$ is a strict core.

By assumption on $A$, there is an edge $uv \in A$ such that $v \in X' \subseteq X$ and $u \notin X'$. If $u \in X$, then $u \in T$. Therefore $u \notin X$, and thus $X$ is covered by $A$.
\end{proof}
The following lemma shows that in $G^F$ we can find with high probability an approximate cover for the set of ${\cal F}^F$-cores. 

\begin{lemma} \label{l:strictAlgorithm}
There is a randomized algorithm that finds a set $A \subseteq E(G^F)$ such that $A$ covers all ${\cal F}^F$-cores and $w^1(A)\le O(|S|)\opta(G,E_\ell)$. Moreover, with probability at least $1 - 2^{-|E|}$ the algorithm runs in polynomial time.
\end{lemma}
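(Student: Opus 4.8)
The plan is to combine the structural results already established, namely Lemmas~\ref{l:reduce}, \ref{l:strictPossible}, and \ref{l:strictEnough}, with the randomized implicit hitting set algorithm described in the overview (Appendix~\ref{app:hitSetReg}). First I would set up the implicit hitting set instance: the ground set is $E(G^F)$ with weights $w^1$, and the family to be covered is ${\cal S} \eqdef \{\varrho_{E(G^F)}(X) : X \text{ is a strict } {\cal F}^F\text{-core}\}$. By Lemma~\ref{l:strictEnough}, a set $A \subseteq E(G^F)$ hits every member of ${\cal S}$ if and only if $A$ covers all ${\cal F}^F$-cores, so it suffices to approximately solve this hitting set instance. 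The key point is that a membership oracle for ${\cal S}$ — equivalently, a separation oracle for the natural LP relaxation — can be implemented in polynomial time: given a candidate fractional solution $x$, one checks (analogously to the discussion of \eqref{eq:LP}) whether every strict ${\cal F}^F$-core is fractionally covered, using max-flow computations, and if not, produces a violated strict core. I would need to spell out that a violated strict core can indeed be extracted in polynomial time — this uses that strict cores, by the bound $X \setminus C \subseteq S$ from the proof of Lemma~\ref{l:approxValue}, are "close" to the inclusion-minimal sets, and that the minimal sets ${\cal M}({\cal F}^F)$ (at most $|T|$ of them) can each be found as minimum cuts.

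With the oracle in hand, I would invoke the randomized rounding algorithm for implicit hitting set: it solves the LP relaxation via the ellipsoid method with the separation oracle, then performs randomized rounding, repeating the rounding $O(\log|{\cal S}|)$ times to obtain a feasible integral solution with weight $O(\log|{\cal S}|)$ times the LP optimum, and hence $O(\log|{\cal S}|)$ times $\opta$ restricted to this sub-instance. Here I would plug in Lemma~\ref{l:approxValue}: $|{\cal S}| \le 2^{|S|}\cdot|T|$, so $\log|{\cal S}| \le |S| + \log|T| = O(|S|)$ (using $|T| \le 2^{|S|\log|T|}$, or simply that $\log|T| = O(|S|\log|T|)$ and absorbing). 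Combining with Lemma~\ref{l:strictPossible}, which guarantees the LP optimum over $E(G^F)$ is at most $\opta(G,E_\ell)$, we get an $A \subseteq E(G^F)$ covering all strict ${\cal F}^F$-cores — hence, by Lemma~\ref{l:strictEnough}, all ${\cal F}^F$-cores — with $w^1(A) \le O(|S|)\opta(G,E_\ell)$. The high-probability polynomial running time, with failure probability at most $2^{-|E|}$, is inherited directly from the implicit hitting set algorithm's guarantee (stated in the overview), noting that the ground set $E(G^F)$ has size polynomial in $|E|$ and $|V|$, and that $G^F$ itself is constructed in polynomial time with high probability.

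The main obstacle I expect is verifying that the separation oracle for the strict-core LP is genuinely polynomial time. Checking feasibility of $x$ against all ${\cal F}^F$-cores would be the clean approach, but the family of ${\cal F}^F$-cores can be as large as ${\cal F}^F$ itself (as emphasized in the text), so I must restrict attention to strict cores, and it is not a priori obvious that "is there a strict ${\cal F}^F$-core $X$ with $\sum_{e \in \varrho_{E(G^F)}(X)} x_e < 1$" can be decided efficiently. The resolution should be: for each $C \in {\cal M}({\cal F}^F)$ (polynomially many, each obtainable via a min-cut computation in $(V, E_\ell)$ with appropriate capacities), the strict cores containing $C$ differ from $C$ only by adding Steiner vertices, and one can search for the minimum-$x$-capacity such cut by a max-flow / min-cut computation in an auxiliary capacitated graph that forbids adding terminals outside $C$ and contracts $C$. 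I would also need to confirm, perhaps by reference to the deferred claim in Appendix~\ref{a:proofs}, that $G^F$ and the family ${\cal M}({\cal F}^F)$ are available in polynomial time so that the oracle can be built at all; modulo that, the remaining steps are routine applications of the cited ingredients.
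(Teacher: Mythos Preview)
Your proposal is correct and follows essentially the same approach as the paper: set up the implicit hitting set instance on $E(G^F)$ with the family of strict ${\cal F}^F$-cores, build a polynomial-time separation oracle by iterating over the minimal cores $C \in {\cal M}({\cal F}^F)$ and running a min-cut computation in an auxiliary capacitated graph that penalizes cuts containing terminals outside $C$, then apply the randomized rounding algorithm and invoke Lemmas~\ref{l:approxValue}, \ref{l:strictPossible}, and \ref{l:strictEnough}. The paper's oracle differs only in implementation detail---rather than contracting $C$, it adds unit-capacity edges $rs$ for $s \in T \setminus C_t$ and checks whether the minimum $rt$-cut in $(V, E(G^F) \cup E_\ell \cup F)$ has capacity below $\ell+1$---but this is the same ``forbid terminals outside $C$'' idea you describe.
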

The proof of Lemma~\ref{l:strictAlgorithm} is technically involved, however the main idea is quite straightforward. 
First, note that by Lemma~\ref{l:strictEnough}, it suffices to show an algorithm that finds an approximate cover for the set of strict ${\cal F}^F$-cores.
We show that a linear program relaxation of the latter has a polynomial time separation oracle, hence the randomized rounding algorithm for the hitting set finds such a cover in polynomial time with high probability.
The proof is given in detail in Appendix~\ref{a:proofs}.

We are now ready to prove Theorem~\ref{th:approxAugm}, by describing an $O(|S|\log|T|)$-approximation algorithm for the augmentation problem. The algorithm iteratively prunes ${\cal F}$ by covering the set of tight cores in every iteration. 
Starting with $F = \emptyset$, and during every iteration until $F$ covers ${\cal F}$, we add to $F$ a set $F' \subseteq E \setminus E_\ell$ of weight at most $O(|S|) \opta(G, E_\ell)$  that covers the family of ${\cal F}^F$-cores. The mechanism is presented in detail as Algorithm~\ref{alg:gen}. 

\begin{algorithm}
\begin{algorithmic}[1]
\STATE $j\gets 1$
\STATE $F_j \leftarrow \emptyset$
\WHILE{${\cal F}^F \neq \emptyset$}
\STATE {\em (implicitly)} let ${\cal F}_j$ be the set of ${\cal F}^F$-cores.
\STATE let $F' \subseteq E \setminus E_\ell$ be an edge set covering ${\cal F}_j$ such that $w(F') \le O(|S|) \opta(G, E_\ell)$. \label{line:aug}
\STATE $F \leftarrow F \cup F'$.
\STATE $j \leftarrow j+1$.
\ENDWHILE
\RETURN $F$.
\end{algorithmic}
\caption{$O(|S| \log |T|)$-Approximation Algorithm for the Connectivity Augmentation Problem}
\label{alg:gen}
\end{algorithm}
\begin{proof}[of Theorem~\ref{th:approxAugm}]
We first show that each iteration is executed in polynomial time with probability at least $1 - 2^{-|E|}$. 
Given a set $F \subseteq E \setminus E_\ell$, we efficiently (deterministically) check whether ${\cal F}^F = \emptyset$, by verifying that for every $t \in T$, there are $\ell+1$ edge-disjoint $rt$-paths in $(V,E_\ell \cup F)$.
To see that line~\ref{line:aug} runs in polynomial time with high probability, first note that given $G,E_\ell,F$, we can construct $G^F$ in polynomial time. 
%
By Lemma~\ref{l:strictAlgorithm}, there is a randomized algorithm that finds a set $A \subseteq E(G^F)$ such that $A$ covers all ${\cal F}^F$-cores and $w^1(A)\le O(|S|)\opta(G,E_\ell)$. Moreover, with probability at least $1 - 2^{-|E|}$ the algorithm runs in polynomial time. Lemma~\ref{l:reduce} then guarantees that we can construct from $A$ an edge set $F' \subseteq E \setminus E_\ell$ that covers all ${\cal F}^F$ cores, and such that $w(F') \le w^1(A) \le O(|S|)\opta(G,E_\ell)$.

Therefore each iteration can be done in polynomial time with probability at least $1 - 2^{-|E|}$. Lemma~\ref{l:half} ensures that Algorithm~\ref{alg:gen} terminates after at most $\log |T|$ iterations. Applying a union bound, we get that Algorithm~\ref{alg:gen} is an $O(|S| \log |T|)$-approximation algorithm for the connectivity augmentation problem, that runs in polynomial time with probability at least $1 - (2^{-|E|}\log |T|)$.

\end{proof}
\iflipics
\bibliographystyle{plain}
\else
\bibliographystyle{splncs04}
\fi
\bibliography{DST}
\newpage
\appendix
%

\section{Implicit Hitting Set Randomized Rounding Algorithm}\label{app:hitSetReg}
Let $E,w,{\cal S}$ be an instance for the implicit hitting set problem. In this setting of implicit hitting set, the size of the input for the problem is polynomial in $|E|$, while ${\cal S} \subseteq 2^E$ can be much larger and is thus given implicitly. The fractional relaxation of the problem can be formulated as the following linear program.
\begin{equation}
\arraycolsep=1.4pt\def\arraystretch{1.6}
\begin{array}{lll}
min \sum_{e \in E}{w_ex_e}\quad
s.t.\quad &\sum_{e \in U}{x_e} \ge 1 &\quad \quad  \forall U \in {\cal S}\\
& x_e \ge 0 &\quad \quad \forall e \in E
\end{array}
\label{appEq:HSLP}
\end{equation}
A classical result shows that if $x^*$ is an optimal solution to the linear program \eqref{appEq:HSLP}, there is a polynomial time randomized rounding algorithm that returns with probability at least $\frac{1}{2}$ a feasible solution $F$ for the hitting set problem satisfying $w(F) \le  O(\log|{\cal S}|)w(x^*)$, where $w(x^*) = \sum_{e \in E}{w_ex^*_e}$. We can further claim that if there is a polynomial time separation oracle to the linear program \eqref{appEq:HSLP}, then we can efficiently verify whether $F$ is a feasible solution. 
This result is well known, however for sake of completeness we include the algorithm (presented as Algorithm~\ref{appAlg:HSRR}) and formal statement (Lemma~\ref{appC:hitSetEffApprox}). For a more detailed analysis see, e.g. \cite[Chapter~14]{Vazirani01}.
\begin{algorithm}[th]
\begin{algorithmic}[1]
\STATE let $x^* = \{x^*_e\}_{e \in E}$ be an optimal solution for \eqref{appEq:HSLP}.
\STATE $F \leftarrow \emptyset$ \label{l:empty}
\STATE repeat $O(\log|{\cal S}|)$ times independently.
\STATE \hspace{1pc} For every $e \in E$ independently, add $e$ to $F$ with probability $x^*_e$.
\IF{$F$ is a feasible solution to the hitting set problem and $w(F) \le O(\log|{\cal S}|)w(x^*)$}
\RETURN $F$.
\ELSE
\STATE go to line \ref{l:empty}.
\ENDIF
\end{algorithmic}
\caption{Hitting Set Randomized Rounding}
\label{appAlg:HSRR}
\end{algorithm}

\begin{lemma}\label{appC:hitSetEffApprox}
Algorithm~\ref{appAlg:HSRR} always returns a feasible solution $F$ of weight at most $O(\log|{\cal S}|)$ times the optimal weight. Moreover, if there is a polynomial time separation oracle to the linear program \eqref{appEq:HSLP}, then with probability at least $1 - 2^{-|E|}$ Algorithm~\ref{appAlg:HSRR} runs in polynomial time.
\end{lemma}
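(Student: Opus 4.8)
The plan is to prove Lemma~\ref{appC:hitSetEffApprox} in two halves: first establish the approximation guarantee (which holds with certainty because of the explicit feasibility and weight check inside the loop), and then bound the running time assuming a polynomial-time separation oracle is available.

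\textbf{Approximation guarantee.} This part is essentially definitional. Algorithm~\ref{appAlg:HSRR} only ever returns $F$ inside the \textbf{if}-branch, which is guarded by the two conditions that $F$ is a feasible hitting set and that $w(F) \le O(\log|{\cal S}|)w(x^*)$. Since $w(x^*)$ is the optimal value of the LP relaxation \eqref{appEq:HSLP}, which lower-bounds the optimal integral hitting-set weight, any returned $F$ automatically has weight at most $O(\log|{\cal S}|)$ times the optimum. So whenever the algorithm halts, it halts with a valid answer; nothing probabilistic is needed here.

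\textbf{Running time.} The plan is a standard union-bound / repeated-trials argument. First I would recall the classical randomized-rounding analysis: after one round of $O(\log|{\cal S}|)$ independent repetitions of the sampling step, with probability at least $\tfrac12$ the resulting $F$ is both feasible (each $U \in {\cal S}$ is hit — failure probability per set at most $1/|{\cal S}|^2$ after $c\log|{\cal S}|$ repetitions, then union-bound over ${\cal S}$) and has weight $O(\log|{\cal S}|)w(x^*)$ (Markov on $\mathbb{E}[w(F)] = O(\log|{\cal S}|)w(x^*)$). Hence one pass through lines \ref{l:empty}–\ref{l:empty-end-of-loop} succeeds with probability $\ge \tfrac12$. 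Now I would run this whole block $|E|$ times independently (the algorithm as written loops back to line~\ref{l:empty} on failure, so this just means reading ``go to line~\ref{l:empty}'' as bounded by $|E|$ restarts); the probability that all $|E|$ attempts fail is at most $2^{-|E|}$. Each attempt does $O(\log|{\cal S}|) = O(|E|)$ sampling passes over $E$, plus one feasibility check and one weight check. The weight check is trivial; the feasibility check — deciding whether $F \cap U \ne \emptyset$ for every $U \in {\cal S}$ — is exactly one call to the separation oracle, which by hypothesis runs in time polynomial in $|E|$. Solving \eqref{appEq:HSLP} to obtain $x^*$ in line~1 is also polynomial via the ellipsoid method using the same separation oracle. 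Putting these together, with probability at least $1 - 2^{-|E|}$ the algorithm terminates within $\mathrm{poly}(|E|)$ time.

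\textbf{Main obstacle.} The only subtle point is bookkeeping the two uses of the separation oracle: it is needed both to solve the LP in line~1 (so that $x^*$ and the value $w(x^*)$ are available in polynomial time — note $\log|{\cal S}|$ itself may only be known up to $\mathrm{poly}(|E|)$ but a polynomial upper bound suffices for all the $O(\cdot)$ expressions) and to perform the feasibility test each iteration; without it the feasibility test would require examining all of ${\cal S}$, which may be exponential. Once it is granted, everything else is the textbook randomized-rounding analysis (cf.~\cite[Chapter~14]{Vazirani01}), so I would keep that portion terse and cite the standard reference rather than re-deriving the Chernoff/Markov bounds.
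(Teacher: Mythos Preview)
Your proposal is correct and is precisely the textbook argument the paper has in mind; the paper itself does not give a proof of this lemma but simply states the result and refers the reader to \cite[Chapter~14]{Vazirani01}. Your write-up is therefore a faithful expansion of that citation, including the two points the paper leaves implicit: that the separation oracle is what makes both line~1 (solving \eqref{appEq:HSLP} via the ellipsoid method) and the feasibility check in the \textbf{if}-branch run in polynomial time, and that $\log|{\cal S}|\le |E|$ so each pass is polynomial in $|E|$.
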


\section{Proof of Proposition \ref{p:tight}}
\begin{proof}
Let $X, Y \in {\cal U}_\ell$, and assume that $X \cap Y \cap T \ne \emptyset$. Therefore $X \cup Y, X \cap Y \in {\cal U}$ and furthermore, by submodularity of the cut function,
$$2 \ell = |\varrho_{E_\ell}(X)| + |\varrho_{E_\ell}(Y)| \ge |\varrho_{E_\ell}(X \cup Y) | + |\varrho_{E_\ell}(X \cap Y)| \;.$$ 
By the definition of $E_\ell$, $|\varrho_{E_\ell}(X \cup Y) | = |\varrho_{E_\ell}(X \cap Y)| = \ell$, and therefore  $X \cap Y,X \cup Y \in {\cal U}_\ell$.
\end{proof}

\section{Proofs for Section~\ref{sec:covCor}} \label{a:proofs}
In this section we complete the technical proofs of Section ~\ref{sec:covCor}.

\subsection{Proof of Lemma \ref{l:partition}}
\begin{proof}
Assume $C \cap X \cap T \ne \emptyset$, then by the previous claim, $C \cap X \in {\cal F}^F$. Since $C \in {\cal F}^F$ is inclusion-minimal, $C \cap X = C$ and thus $C \subseteq X$.
\end{proof}

\subsection{Proof of Lemma \ref{l:half}}

\begin{proof}
Let $C \in {\cal M}({\cal F}^{F \cup F'})$, then $C \in {\cal F}$ is not covered by $F \cup F'$, i.e.
$\varrho_{F \cup F'}(C)= \emptyset$. It follows that $\varrho_F(C) = \emptyset$, and therefore $C \in {\cal F}^F$. In addition, $\varrho_{F'}(C) = \emptyset$ and thus $C$ is not covered by $F'$. Since $F'$ covers all ${\cal F}^F$-cores, $C$ is not an ${\cal F}^F$-core. Hence, there are $X,Y \in {\cal M}({\cal F}^F)$ such that $X,Y \subseteq C$.
Denote ${\cal X}_C \eqdef \{X \in {\cal M}({\cal F}^F) : X \subseteq C\}$. Then ${\cal X}_C \subseteq {\cal M}({\cal F}^F)$, and we have shown that $|{\cal X}_C| \ge 2$.
Since ${\cal F}^{F \cup F'}$ is also a $T$-intersecting family, inclusion-minimal elements of ${\cal F}^{F \cup F'}$ are terminal-disjoint. Therefore $C$ is the unique element of ${\cal M}({\cal F}^{F \cup F'})$ containing $X,Y$.
It follows that $\{{\cal X}_C\}_{C \in {\cal M}({\cal F}^{F \cup F'})}$ is a partition of ${\cal M}({\cal F}^F)$, in which every set contains at least two elements. Therefore $|{\cal M}({\cal F}^{F \cup F'})| \le \frac{1}{2}|{\cal M}({\cal F}^F)|$.
\end{proof}

\subsection{$G^F$ can be constructed in polynomial time}
In order to prove that $G^F$ can be constructed efficiently, we need to show that we can efficiently decide for every $x \in V$ whether $x \in \bigcup_{C \in {\cal M}({\cal F}^F)}{C}$. 
\begin{claim}
Let $t \in T$. If there exists a tight set $X \in {\cal F}^F$ such that $t \in X$, then there exists a unique inclusion-minimal such set.
\end{claim}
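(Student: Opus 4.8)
The claim asserts that for a terminal $t \in T$, if some tight set $X \in {\cal F}^F$ contains $t$, then there is a unique inclusion-minimal such set. The natural approach is to show that the family of tight sets containing $t$ is closed under intersection, so the intersection of all of them is itself a tight set containing $t$, and hence is the unique inclusion-minimal one.

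The plan is as follows. First I would let ${\cal F}^F_t \eqdef \{X \in {\cal F}^F : t \in X\}$ and suppose it is nonempty. Take any two sets $X, Y \in {\cal F}^F_t$. Since $t \in X \cap Y$ and $t \in T$, we have $X \cap Y \cap T \ne \emptyset$, so the hereditary $T$-intersecting property of ${\cal F}^F$ (established in the excerpt right before Definition~\ref{def:core}) gives $X \cap Y \in {\cal F}^F$; and clearly $t \in X \cap Y$, so $X \cap Y \in {\cal F}^F_t$. Thus ${\cal F}^F_t$ is closed under pairwise intersection, hence under finite intersection. Since $V$ is finite, ${\cal F}^F_t$ is a finite family, so $C_t \eqdef \bigcap_{X \in {\cal F}^F_t} X$ is an intersection of finitely many members of ${\cal F}^F_t$ and therefore lies in ${\cal F}^F_t$. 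By construction $C_t \subseteq X$ for every $X \in {\cal F}^F_t$, so $C_t$ is the unique inclusion-minimal set in ${\cal F}^F_t$; any other inclusion-minimal set $X$ would satisfy $C_t \subseteq X$ and, being minimal, $X = C_t$.

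I would then note briefly how this yields the algorithmic consequence needed for constructing $G^F$: for each $t \in T$, the set $C_t$ (when it exists) is computable in polynomial time — start from $V \setminus \{r\}$ and, as long as the current candidate set $X$ is tight and contains $t$ but is not inclusion-minimal, use the separation-oracle / min-cut machinery to find a proper tight subset still containing $t$, iterating; alternatively one can directly compute the minimal tight set containing $t$ via a single min-cut computation in $(G, \hat{x})$-style reasoning. Consequently $\bigcup_{C \in {\cal M}({\cal F}^F)}{C} = \bigcup_{t \in T}{C_t}$ over those $t$ for which $C_t$ exists, and membership $x \in \bigcup_{C \in {\cal M}({\cal F}^F)}{C}$ is decidable in polynomial time, which is exactly what the surrounding text requires.

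The only mild subtlety — not really an obstacle — is making sure the hereditary $T$-intersecting property is applied correctly: it requires $X \cap Y \cap T \ne \emptyset$, which is guaranteed here precisely because the common element $t$ is a terminal, so the argument does not go through for an arbitrary common vertex. Everything else is a direct consequence of closure under intersection in a finite universe, so I expect no genuine difficulty in the proof of the claim itself; the mild care needed is in spelling out the polynomial-time computability of the minimal tight sets for the downstream construction of $G^F$.
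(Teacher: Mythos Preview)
Your proof of the claim itself is correct and is essentially identical to the paper's: both define the family of tight sets containing $t$, use the $T$-intersecting property of ${\cal F}^F$ (with $t\in T$ supplying the needed common terminal) to get closure under intersection, and take the total intersection as the unique minimum.

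One caveat about your downstream remark: the equality $\bigcup_{C \in {\cal M}({\cal F}^F)} C = \bigcup_{t} C_t$ need not hold. The paper explicitly observes that $C_t$ is not always in ${\cal M}({\cal F}^F)$ (only the containment ${\cal M}({\cal F}^F) \subseteq \{C_t\}_{t\in T}$ is guaranteed), so the right-hand side can be strictly larger. Deciding membership in $\bigcup_{C \in {\cal M}({\cal F}^F)} C$ therefore requires the further characterization in Proposition~\ref{prop:minSets}, not just computing the $C_t$'s.
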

\begin{proof}
Denote $R_t = \{X \in {\cal F}^F : t \in X\}$. 
Since $t \in T$, and since ${\cal F}^F$ is $T$-intersecting, we get that $\bigcap\limits_{X \in R_t}{X} \in {\cal F}^F$. Clearly it is an inclusion-minimal such set, and we similarly show it is unique.
\end{proof}
For every $t \in T$, denote by $C_t$ the unique inclusion-minimal tight set containing $t$. 
The following claim is straightforward, and it characterizes these minimal sets in a manner that allows to recover them efficiently given $G,E_\ell,F$.
\begin{claim}\label{c:minCuts}
For every $t \in T$, there exists a tight set $X \in {\cal F}^F$ containing $t$ if and only if the minimum $rt$-cut in $(V,E_\ell \cup F)$ contains $\ell$ edges. Moreover, $C_t$ is an inclusion minimal $rt$-cut of capacity $\ell$ and can therefore be computed in polynomial time.
\end{claim}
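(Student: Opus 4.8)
The plan is to deduce the claim from the minimum-cut maximum-flow theorem applied to the unit-capacity graph $(V,E_\ell \cup F)$. The starting observation is that since $F \cap E_\ell = \emptyset$, every $U \subseteq V \setminus \{r\}$ satisfies $|\varrho_{E_\ell \cup F}(U)| = |\varrho_{E_\ell}(U)| + |\varrho_F(U)|$. Consequently, a set $X$ with $r \notin X$ and $t \in X$ belongs to ${\cal F}^F$ precisely when it meets $T$ (automatic, as $t \in X$) and, viewed as an $rt$-cut in $(V,E_\ell \cup F)$, has capacity exactly $\ell$ --- since then the two summands above are forced to be $\ell$ and $0$. I would also record that, because $(V,E_\ell)$ already carries $\ell$ edge-disjoint $rt$-paths, so does $(V,E_\ell \cup F)$; hence every $rt$-cut in the latter has capacity at least $\ell$, and in particular the minimum $rt$-cut capacity is at least $\ell$.

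For the forward direction of the equivalence: if $X \in {\cal F}^F$ contains $t$, the identity above makes $X$ an $rt$-cut of capacity exactly $\ell$ in $(V,E_\ell \cup F)$, so the minimum $rt$-cut has capacity at most $\ell$, hence exactly $\ell$. For the converse: assuming the minimum $rt$-cut in $(V,E_\ell \cup F)$ has $\ell$ edges, pick a minimizer $X$ (so $r \notin X$, $t \in X$, $|\varrho_{E_\ell \cup F}(X)| = \ell$); each of the $\ell$ edge-disjoint $rt$-paths using only edges of $E_\ell$ must enter $X$, giving $\ell \le |\varrho_{E_\ell}(X)| \le |\varrho_{E_\ell \cup F}(X)| = \ell$, so $|\varrho_{E_\ell}(X)| = \ell$ and $|\varrho_F(X)| = 0$; together with $t \in X \cap T$ this yields $X \in {\cal F}^F$ containing $t$.

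For the ``moreover'' part, the two implications just shown say that $\{X \in {\cal F}^F : t \in X\}$ is exactly the family of minimum $rt$-cuts of $(V,E_\ell \cup F)$ containing $t$. The preceding claim already supplies a unique inclusion-minimal member of the left-hand family, namely $C_t$ (equivalently, submodularity of $\varrho$ shows that this family is closed under intersection, hence has a unique inclusion-minimal element). Therefore $C_t$ is the inclusion-minimal minimum $rt$-cut --- i.e., an inclusion-minimal $rt$-cut of capacity $\ell$. Finally, everything is polynomial time: compute a maximum $rt$-flow in the unit-capacity graph $(V,E_\ell \cup F)$; if its value exceeds $\ell$ there is no tight set containing $t$, and otherwise $C_t$ is obtained by a single graph search as the set of vertices from which $t$ is reachable in the residual network, which is standard to verify is the inclusion-minimal minimum $rt$-cut on the side of $t$.

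I do not expect a genuine obstacle here, since the statement merely repackages max-flow min-cut together with submodularity of the cut function; the only points needing care are the orientation conventions (cut edges are the ones \emph{entering} $X$, and the minimum $rt$-cut is chosen on the side containing $t$) and the short verification that the reachability set in the residual network is precisely the smallest minimum cut.
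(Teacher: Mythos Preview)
Your argument is correct and is precisely the routine max-flow min-cut verification the paper has in mind; the paper itself does not spell out a proof, calling the claim ``straightforward.'' Your identification of $\{X \in {\cal F}^F : t \in X\}$ with the set of minimum $rt$-cuts in $(V,E_\ell \cup F)$ via the decomposition $|\varrho_{E_\ell \cup F}(X)| = |\varrho_{E_\ell}(X)| + |\varrho_F(X)|$, together with the residual-graph extraction of the inclusion-minimal sink-side min-cut, is exactly what is needed.
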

%
Intuitively, one may think that for every $t \in T$, $C_t$ is an inclusion-minimal element of ${\cal F}^F$. However, it turns out that this needs not be the case.
Therefore in general we cannot argue that $\{C_t\}_{t \in T} \subseteq {\cal M}({\cal F}^F)$. The following claim shows that the converse containment does hold.
\begin{proposition} \label{prop:minSets}
${\cal M}({\cal F}^F) \subseteq \{C_t\}_{t \in T}$. Furthermore, for every $t \in T$, $C_t \in {\cal M}({\cal F})$ if and only if there exists a strict core containing $t$.
\end{proposition}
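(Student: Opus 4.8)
The plan is to prove the two containments separately. For the first claim, $\mathcal{M}(\mathcal{F}^F) \subseteq \{C_t\}_{t \in T}$, I would take an inclusion-minimal tight set $C \in \mathcal{M}(\mathcal{F}^F)$. Since $C \in \mathcal{F}^F \subseteq \mathcal{F}$, we have $C \cap T \ne \emptyset$, so pick any $t \in C \cap T$. Then $C$ is a tight set containing $t$, so by the preceding claim $C_t$ is well-defined and $C_t \subseteq C$ (as $C_t$ is the inclusion-minimal tight set containing $t$). But $C_t \in \mathcal{F}^F$ and $C$ is inclusion-minimal in $\mathcal{F}^F$, so $C_t = C$, i.e. $C = C_t \in \{C_t\}_{t \in T}$. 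This is the easy direction and should be essentially immediate.

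For the second claim, fix $t \in T$ and suppose $C_t \in \mathcal{M}(\mathcal{F}^F)$. I need to exhibit a strict core containing $t$ and conversely. For the forward direction, I would observe that $C_t$ itself is trivially a strict $\mathcal{F}^F$-core: taking $C = C_t \in \mathcal{M}(\mathcal{F}^F)$ in the definition, we have $C_t \cap T = C \cap T$, and $t \in C_t$, so $C_t$ is a strict core containing $t$. Wait — this shows every $C_t$ that happens to be inclusion-minimal is automatically a strict core, so the real content must be the converse: if there is a strict core $X$ containing $t$, then $C_t \in \mathcal{M}(\mathcal{F}^F)$. Here I would argue as follows. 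Let $X$ be a strict core with $t \in X$, so there is $C \in \mathcal{M}(\mathcal{F}^F)$ with $X \cap T = C \cap T$ and (by the remark after the definition of strict core, via Lemma~\ref{l:partition}) $C \subseteq X$. Since $t \in X \cap T = C \cap T$, we get $t \in C$, so $C$ is a tight set containing $t$, whence $C_t \subseteq C$ by minimality of $C_t$. But $C_t \in \mathcal{F}^F$ and $C \in \mathcal{M}(\mathcal{F}^F)$, so $C_t = C \in \mathcal{M}(\mathcal{F}^F)$, as desired.

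The main subtlety — and where I would be most careful — is reconciling the statement as written with the above: the proposition says ``$C_t \in \mathcal{M}(\mathcal{F})$'' (not $\mathcal{M}(\mathcal{F}^F)$), so I would need to check whether this is a typo for $\mathcal{M}(\mathcal{F}^F)$ or whether a genuinely different argument relating inclusion-minimality in $\mathcal{F}$ versus $\mathcal{F}^F$ is intended. If it truly is $\mathcal{M}(\mathcal{F})$, the forward direction still goes through (an inclusion-minimal element of $\mathcal{F}$ containing $t$, being tight, is a fortiori inclusion-minimal in $\mathcal{F}^F$ among tight sets, hence equals $C_t$ and is a strict core containing $t$), but the converse requires showing $C_t$, known to be inclusion-minimal in $\mathcal{F}^F$, has no proper subset in $\mathcal{F} \setminus \mathcal{F}^F$ containing a terminal — which uses that any such subset would be covered by $F$, combined with the terminal-disjointness of minimal sets (Lemma~\ref{l:partition}) and the structure of $C_t$ as an inclusion-minimal $rt$-cut of capacity $\ell$ from Claim~\ref{c:minCuts}. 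I expect this reconciliation to be the only real obstacle; once the correct reading is pinned down, both directions follow from minimality arguments plus Lemma~\ref{l:partition} and the characterization in Claim~\ref{c:minCuts}.
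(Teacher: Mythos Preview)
Your proposal is correct and follows essentially the same approach as the paper's proof: the first containment via picking $t \in C \cap T$ and invoking minimality of $C_t$, and the biconditional via the forward direction being trivial and the converse going through the $C \in \mathcal{M}(\mathcal{F}^F)$ witnessing the strict core $X$. You are also right that ``$\mathcal{M}(\mathcal{F})$'' in the statement is a typo for ``$\mathcal{M}(\mathcal{F}^F)$''---the paper's own proof concludes $C_t \in \mathcal{M}(\mathcal{F}^F)$, and every subsequent use of the proposition (e.g.\ in Algorithm~\ref{alg:LPSep}) tests membership in $\mathcal{M}(\mathcal{F}^F)$.

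One minor difference worth noting: for the converse you argue $C_t \subseteq C$ directly from the defining minimality of $C_t$, then use inclusion-minimality of $C$ in $\mathcal{F}^F$ to force equality. The paper instead observes $t \in C_t \cap C \cap T$, deduces $C_t \cap C \in \mathcal{F}^F$ from the $T$-intersecting property, and then uses minimality of \emph{both} $C$ and $C_t$ to get $C_t \cap C = C_t = C$. Your route is slightly cleaner since it avoids explicitly invoking $T$-intersection, but both are short minimality arguments and neither buys anything the other does not.
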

\begin{proof}
Let $C \in {\cal M}({\cal F}^F)$, and let $t \in C \cap T$. Then $C$ is an inclusion-minimal $F$-tight set containing $t$. Therefore $C=C_t \in \{C_s\}_{s \in T}$.
Next, fix some $t \in T$. Clearly, if $C_t \in {\cal M}({\cal F})$, then $C_t$ is a strict core containing $t$. Assume that there exists a strict core $X$ containing $t$. Then there exists $C \in {\cal M}({\cal F}^F)$ such that $C \subseteq X$ and $X \cap T = C \cap T$. Since $t \in C_t \cap X \cap T = C_t \cap C \cap T$, and by minimality of both $C$ and $C_t$, and since $C_t \cap C$ is tight, it follows that $C_t\cap C = C_t=C \in {\cal M}({\cal F}^F)$.
\end{proof}
We are now ready to prove the main claim of this section.
\begin{proof}
We first show that $E^0$ (and thus $w^0$) can be constructed in polynomial time given $G,E_\ell,F$. Following the characterization of $\{C_t\}_{t \in T}$ and ${\cal M}({\cal F}^F)$ in Claim~\ref{c:minCuts} and Proposition~\ref{prop:minSets}, we can construct the family ${\cal M}({\cal F}^F)$ in polynomial time. Let $u,v \in V$, and consider the following algorithm for constructing $E^0$. 
\begin{algorithm}[H]
\begin{algorithmic}[1]
\FORALL{$C \in {\cal M}({\cal F}^F)$}
\STATE find in $(V, E_\ell \cup F)$ a minimum cut $U_C$ that separates $C \cup \{v\}$ and $\{r,u\}$.
\IF{$\min_{C \in {\cal M}({\cal F}^F)}|\varrho_{E_\ell \cup F}(U_C)| \ge \ell +1$}
\STATE add $uv$ to $E^0$.
\ENDIF
\ENDFOR
\end{algorithmic}
\label{alg:GFConst}
\end{algorithm}

Note first that the algorithm runs in polynomial time. We will show that the algorithm adds $uv$ to $E^0$ if and only if for every $X \in {\cal F}^F$, if $v \in X$ then $u \in X$. Assume first that there exists $X \in {\cal F}^F$ such that $v \in X$ and $u \notin X$. Then there exists some $C \in {\cal M}({\cal F}^F)$ such that $\{v\} \cup C \subseteq X \subseteq V \setminus \{r,u\}$. Therefore $X$ is a cut separating $C \cup \{v\}$, and since $X \in {\cal F}^F$, then the capacity of $X$ is $\ell$. Therefore $\varrho_{E_\ell \cup F}(U_C) \le \ell$ and $uv$ is not added to $E^0$.  Otherwise, for every $X \in {\cal F}^F$, if $v \in X$ then $u \in X$. Let $C \in {\cal M}({\cal F}^F)$, and let $U_C$ be a minimum cut that separates $C \cup \{v\}$ and $\{r,u\}$ in $(V, E_\ell \cup F)$. Then $C \subseteq U_C \subseteq V \setminus \{r\}$, and therefore $U_C \in {\cal U}$. Since $v \in U_C$ and $u \notin U_C$, then $U_C \notin {\cal F}^F$. Therefore $\varrho_{E_\ell \cup F}(U_C) \ge \ell +1$. Hence $uv$ is added to $E^0$ by the algorithm.

Given $G^0$, the metric completion $G^1$ can be constructed in polynomial time. 
Finally, for every $u \in V$ and $s \in T$, by Proposition~\ref{prop:minSets} we can decide in polynomial time whether $u \notin \bigcup_{C \in {\cal M}({\cal F}^F)}{C}$ and whether $s \in \bigcup_{C \in {\cal M}({\cal F}^F)}{C}$. In a similar manner to constructing $E^0$, we can decide efficiently whether to remove all edges leaving $u$. Therefore the set $E(G^F)$ can be constructed in polynomial time.
\end{proof}

\subsection{Proof of Lemma~\ref{l:strictAlgorithm}}



Consider the instance $E(G^F),w,{\cal S}^F$ for the implicit hitting set problem, where $${\cal S}^F= \{\varrho_{E(G^F)}(X) : \text{$X$ is a strict ${\cal F}^F$-core}\} \;,$$ and the corresponding fractional relaxation.
\begin{equation}
\arraycolsep=1.4pt\def\arraystretch{1.6}
\begin{array}{lll}
min \sum_{e \in E(G^F)}{w^1_ex_e}\quad
s.t. \quad&\sum_{e \in \varrho_{E(G^F)}(X)}{x_e} \ge 1 &\quad \quad  \text{for all strict ${\cal F}^F$-cores $X$}\\
& x_e \ge 0 &\quad \quad \forall e \in E(G^F)
\end{array}
\label{eq:LPCores}
\end{equation}
%

We present a polynomial time separation oracle for the linear program~\eqref{eq:LPCores}. That is, we show an algorithm which, given a vector $x \in {\mathbb R}_+^{E(G^F)}$ either reports that $x$ is feasible for \eqref{eq:LPCores} or returns a constraint violated by $x$. For every $t \in T$, the algorithm verifies that $x$ satisfies the constraints regarding strict cores containing $t$. Given some $t \in T$, we first check whether there is a strict core containing $t$. If there is no such strict core, we are done. Otherwise, we find the (unique) element $C \in {\cal M}({\cal F}^F)$ containing $t$, and construct an edge-capacitated auxiliary graph $H_t=(V,E',\{x^t_e\}_{e \in E'})$, where $E' = E(G^F) \cup E_\ell \cup F$ (note that $E'$ may contain parallel edges), and $x^t$ is defined as follows. For every $s \in T \setminus C$ we set $x^t_{rs}$ to $1$, as demonstrated in Figure~\ref{fig:auxiliary}. In addition, we define $x^t_e = 1$ for all $e \in E_\ell \cup F$. For all other edges $e$, we set $x^t_e = x_e$. We now find a minimum $rt$-cut in $H_t$. The key observation is that if the capacity of the cut is strictly less than $\ell+1$, then $V \setminus U$ is a strict ${\cal F}^F$-core that is violated by $x$ (see Figure~\ref{fig:auxiliary-c}). The detailed algorithm is given as Algorithm~\ref{alg:LPSep}.

\begin{figure}[ht]
  \begin{center}
  \begin{subfigure}[t]{.25\linewidth}
        \includegraphics[scale=0.5]{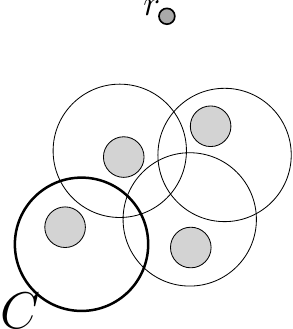}
        \caption{Given $t \in T$, the highlighted area represents the unique $C \in {\cal M}({\cal F}^F)$ such that $t \in C$. The white circles represent ${\cal M}({\cal F}^F) \setminus \{C\}$, while their intersection with $T$ is dark.}
        \label{fig:auxiliary-a}
   \end{subfigure} \hspace{2pc}
   \begin{subfigure}[t]{.25\linewidth}
        \includegraphics[scale=0.5]{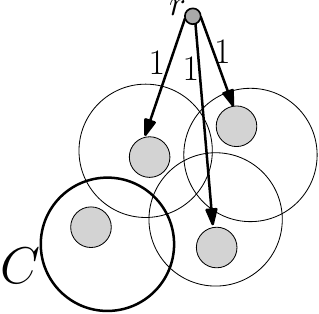}
        \caption{We add $1$-capacity edges from $r$ to $s$ for all $s \in T \setminus C$.}
        \label{fig:auxiliary-b}
   \end{subfigure} \hspace{2pc}
   \begin{subfigure}[t]{.25\linewidth}
        \includegraphics[scale=0.5]{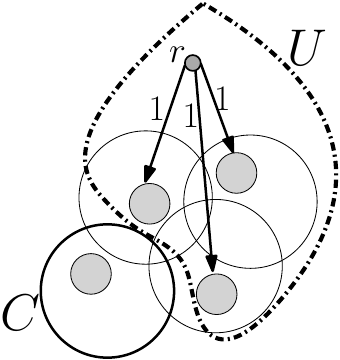}
        \caption{An $rt$-cut $U$ of capacity strictly less than $1$ must then contain $T \setminus C$. Therefore the set $V \setminus U$ is a strict ${\cal F}^F$-core.}
        \label{fig:auxiliary-c}
   \end{subfigure}
 \end{center}
  \caption{Constructing an Auxiliary Graph $H_t$}
  \label{fig:auxiliary}
  \hrule
\end{figure}

\begin{algorithm}[ht]
\begin{algorithmic}[1]
\REQUIRE $x = \{x_e\}_{e \in E(G^F)}$
\ENSURE either report ``$x$ is feasible'' or find a violated constraint of \eqref{eq:LPCores}.
\STATE let $E' = E(G^F) \cup E_\ell \cup F$ be a multiset of edges.
\FORALL{$t \in T$}
\IF{$C_t \in {\cal M}({\cal F}^F)$}
\FORALL{$e \in E'$}
\IF{$e \in F \cup E_\ell$ or there exists $s \in T \setminus C_t$ such that $e=rs$}
\STATE let $x^t_e \leftarrow 1$.
\ELSE
\STATE let $x^t_e \leftarrow x_e$.
\ENDIF
\ENDFOR
\STATE find a minimum $rt$-cut $U$ in $H_t=(V,E')$ with capacities $\{x'_e\}_{e \in E'}$.
\IF{the capacity of the cut is less than $\ell+1$}
\RETURN $V \setminus U$.
\ENDIF
\ENDIF
\ENDFOR
\RETURN $x$ is feasible.
\end{algorithmic}
\caption{Separation Oracle for \eqref{eq:LPCores}}
\label{alg:LPSep}
\end{algorithm}
We first show that the Algorithm~\ref{alg:LPSep} runs in time polynomial in the size of $G$. Fix some $t \in T$. We can efficiently find $C_t$, and by the characterization in Proposition~\ref{prop:minSets}, we can efficiently check if $C_t \in {\cal M}({\cal F}^F)$
. Since adjusting the graph and finding a minimum cut can be done efficiently, we get that Algorithm~\ref{alg:LPSep} runs in polynomial time.
The following two claims prove the correctness of the algorithm.
\begin{claim}
If $x$ is feasible for \eqref{eq:LPCores}, then Algorithm~\ref{alg:LPSep} verifies it.
\end{claim}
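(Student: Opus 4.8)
The plan is to reduce to a per-terminal statement: I would show that for every $t \in T$ for which the loop enters the inner block (i.e.\ $C_t \in {\cal M}({\cal F}^F)$), the minimum $rt$-cut of $H_t$ has capacity at least $\ell+1$; granting this, no iteration executes its return statement, so the algorithm reaches its last line and correctly declares $x$ feasible. Fix such a $t$ and let $W$ be the complement of an arbitrary $rt$-cut, i.e.\ $t \in W$, $r \notin W$ (so $W \in {\cal U}$); the goal is $\mathrm{cap}_{H_t}(W) := \sum_{e\in\varrho_{E'}(W)} x^t_e \ge \ell+1$. I would split this sum into: the copies coming from $E_\ell$ and from $F$, each of capacity $1$, totalling $|\varrho_{E_\ell\cup F}(W)|$; the added edges $rs$ with $s\in T\setminus C_t$ that enter $W$, i.e.\ those with $s\in W$, each of capacity $1$; and the remaining edges of $E(G^F)$ entering $W$, each of capacity $x_e$. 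Using that $(V,E_\ell)$ has $\ell$ edge-disjoint $rt$-paths and $t\in W$, $r\notin W$, one always has $|\varrho_{E_\ell}(W)|\ge\ell$.

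Then I would do a two-case analysis. If $W\cap(T\setminus C_t)\neq\emptyset$, a unit-capacity edge $rs$ enters $W$ and $\mathrm{cap}_{H_t}(W)\ge\ell+1$ immediately. Otherwise $W\cap T\subseteq C_t$, and I argue by contradiction: if $\mathrm{cap}_{H_t}(W)<\ell+1$ then the \emph{integer} $|\varrho_{E_\ell\cup F}(W)|$ is squeezed between $\ell$ and $\ell+1$, forcing $|\varrho_{E_\ell}(W)|=\ell$ and $\varrho_F(W)=\emptyset$; hence $W\in{\cal F}$ and $W$ is $F$-tight, so $W\in{\cal F}^F$. The crux is to upgrade this to ``$W$ is a strict ${\cal F}^F$-core'': since $t\in W\cap C_t\cap T$ and ${\cal F}^F$ is $T$-intersecting, $W\cap C_t\in{\cal F}^F$, and minimality of $C_t$ among $F$-tight sets containing $t$ (and $C_t\in{\cal M}({\cal F}^F)$ in this branch, cf.\ Proposition~\ref{prop:minSets}) gives $C_t\subseteq W$; combined with $W\cap T\subseteq C_t$ this yields $W\cap T=C_t\cap T$, so $W$ is a strict core witnessed by $C_t$. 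Feasibility of $x$ for \eqref{eq:LPCores} then gives $\sum_{e\in\varrho_{E(G^F)}(W)}x_e\ge1$, and since no edge $rs$ with $s\in T\setminus C_t$ enters $W$, every edge of $\varrho_{E(G^F)}(W)$ contributes its full $x$-value, so $\mathrm{cap}_{H_t}(W)\ge|\varrho_{E_\ell}(W)|+\sum_{e\in\varrho_{E(G^F)}(W)}x_e\ge\ell+1$ --- a contradiction. This closes both cases.

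I expect the main obstacle to be precisely this upgrade step: extracting from a cheap cut of $H_t$ both that $W\in{\cal F}^F$ (which relies on the integrality of the $E_\ell\cup F$ part of the capacity to peel it off from the fractional $x$-part) and that $W$ is moreover \emph{strict} (which relies on the $T$-intersecting closure of ${\cal F}^F$ together with minimality of $C_t$). The remaining work --- bookkeeping the capacities of the multiset $E'=E(G^F)\cup E_\ell\cup F$ and checking that the added $r$-to-$(T\setminus C_t)$ edges are present in $H_t$ with unit weight --- is routine.
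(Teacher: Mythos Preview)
Your proposal is correct and follows essentially the same route as the paper's own proof: both fix $t$ with $C_t\in{\cal M}({\cal F}^F)$, take an arbitrary $rt$-cut in $H_t$, and bound its capacity by $\ell+1$ via a case analysis according to whether a terminal outside $C_t$ lies in the cut side and whether $F$ covers it, falling back on the strict-core constraint of \eqref{eq:LPCores} otherwise. Your write-up is in fact slightly more careful than the paper's in two places---you make explicit the integrality argument that forces $|\varrho_{E_\ell}(W)|=\ell$ and $\varrho_F(W)=\emptyset$, and you spell out (via the $T$-intersecting property and minimality of $C_t$) why $C_t\subseteq W$ and hence $W\cap T=C_t\cap T$---whereas the paper asserts these steps without justification.
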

\begin{proof}
Assume $x$ is feasible for \eqref{eq:LPCores}, and let $t \in T$ be such that $C_t \in {\cal M}({\cal F}^F)$. Let $U$ be an $rt$-cut in $H_t$, and denote $X = V \setminus U$. Then $t \in X$ and $r \notin X$, and therefore $X \in {\cal F}$. If there is an edge $e \in F$ that covers $X$, then $e \in \varrho_{E'}(X)$, and the capacity of the cut is 
$\sum_{e' \in \varrho_{E'}(X)}{x^t_{e'}} \ge x^t_e + \sum_{e' \in \varrho_{E_\ell}(X)}{x^t_{e'}}= \ell+1$. If there is $s \in T \setminus C_t$ such that $s \in X$, then similarly $\sum_{e' \in \varrho_{E'}(X)}{x^t_{e'}} \ge \ell + 1$.
Otherwise, $X \in {\cal F}^F$, and moreover, $X \cap T = C_t \cap T$. Therefore $X$ is a strict ${\cal F}^F$-core, and since $x$ is feasible for \eqref{eq:LPCores}, 
$$\sum_{e' \in \varrho_{E'}(X)}{x^t_{e'}} \ge \ell + \sum_{e' \in \varrho_{E(G^F)}(X)}{x^t_{e'}} = \ell + \sum_{e' \in \varrho_{E(G^F)}(X)}{x_{e'}} \ge \ell + 1\; . $$
\end{proof}

\begin{claim}
If $x$ is infeasible for \eqref{eq:LPCores}, then Algorithm~\ref{alg:LPSep} returns a violated strict ${\cal F}^F$-core.
\end{claim}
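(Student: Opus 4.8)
The plan is to deduce from the infeasibility of $x$ a specific iteration of Algorithm~\ref{alg:LPSep} whose minimum-cut test is forced to succeed, and then to show, separately, that \emph{any} set the algorithm returns is a genuinely violated strict ${\cal F}^F$-core. Thus the proof splits into a completeness part (the algorithm does not wrongly report ``$x$ is feasible'') and a soundness part (whatever it outputs is a violated strict core); the two together give the claim.

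For completeness, infeasibility means there is a strict ${\cal F}^F$-core $X$ with $\sum_{e \in \varrho_{E(G^F)}(X)} x_e < 1$; let $C \in {\cal M}({\cal F}^F)$ be the unique inclusion-minimal tight set inside $X$ and pick any $t \in C \cap T = X \cap T$. Then $C_t = C$ (it is both the least $F$-tight set through $t$ and minimal in ${\cal F}^F$), so $C_t \in {\cal M}({\cal F}^F)$ and iteration $t$ is executed, and $X \cap T = C_t \cap T$. I would then bound the capacity of the $rt$-cut $U = V \setminus X$ in $H_t$ by tallying $\sum_{e \in \varrho_{E'}(X)} x^t_e$: the $\ell$ edges of $\varrho_{E_\ell}(X)$ contribute exactly $\ell$; $\varrho_F(X) = \emptyset$ since $X$ is $F$-tight; no added edge $rs$ with $s \in T \setminus C_t$ enters $X$ because $X \cap T = C_t \cap T$; and every remaining edge entering $X$ belongs to $\varrho_{E(G^F)}(X)$ and keeps its capacity $x_e$, so these contribute $\sum_{e \in \varrho_{E(G^F)}(X)} x_e < 1$. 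Hence the minimum $rt$-cut in $H_t$ has capacity strictly below $\ell+1$, so the algorithm returns some set in iteration $t$ (or in an earlier one).

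For soundness, suppose the algorithm returns $X' = V \setminus U'$ in an iteration $t$ with $C_t \in {\cal M}({\cal F}^F)$, the capacity of $U'$ in $H_t$ being below $\ell+1$. Since $t \in X'$ and $r \notin X'$ we have $X' \in {\cal U}$, so $|\varrho_{E_\ell}(X')| \ge \ell$, and those edges alone already contribute $\ge \ell$ at unit capacity; hence no $F$-edge and no added edge $rs$ with $s \in T \setminus C_t$ enters $X'$, and at most (so exactly) $\ell$ edges of $E_\ell$ enter $X'$ --- otherwise the capacity would reach $\ell+1$. The first and last facts give $X' \in {\cal F}^F$, the middle one gives $X' \cap T \subseteq C_t$; and since $X' \in {\cal F}^F$ contains $t$, minimality of $C_t$ forces $C_t \subseteq X'$, so $X' \cap T = C_t \cap T$ and $X'$ is a strict ${\cal F}^F$-core. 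Subtracting the $\ell$ unit contributions of $\varrho_{E_\ell}(X')$ from the capacity bound then leaves $\sum_{e \in \varrho_{E(G^F)}(X')} x_e < 1$, so $X'$ violates its own constraint in \eqref{eq:LPCores}, completing the soundness part.

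I expect the only real subtlety to be the bookkeeping around the multiset $E' = E(G^F) \cup E_\ell \cup F$: one must remember that the edges of $E(G^F)$ (which live in the metric completion $G^1$) are formally distinct from those of $E_\ell$ and $F$, so that a cut's capacity decomposes cleanly into the $\ell$ unit weights from $E_\ell$, the unit weights from $F$ and the added $r$-edges, and an untouched copy of $\sum_e x_e$ over $\varrho_{E(G^F)}$; and that the only $E(G^F)$-edges assigned capacity $1$ are the $rs$ with $s \in T \setminus C_t$, which by the terminal-trace equality never cross into $X$ (respectively $X'$). The identifications $C_t = C$ and $C_t \subseteq X'$ are routine once one recalls that ${\cal F}^F$ is $T$-intersecting and that $C_t$ is the minimum $F$-tight set through $t$.
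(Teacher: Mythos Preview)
Your proof is correct and follows the same approach as the paper's: exhibit a strict core $X$ witnessing infeasibility, observe that its complement is a low-capacity $rt$-cut in $H_t$, and then argue that any returned $V\setminus U$ with capacity below $\ell+1$ must itself be a violated strict core. Your explicit split into a completeness part and a soundness part is a cleaner presentation than the paper's (which treats only the iteration $t$ arising from the witness $X$ and leaves the ``earlier iteration'' case implicit), and your observations that $C_t\subseteq X'$ by minimality and that the multiset decomposition of $E'$ keeps the $x_e$-contributions intact are exactly the small bookkeeping steps the paper glosses over.
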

\begin{proof}
Assume that $x$ is infeasible. Then there exists $X \in {\cal F}^F$ such that $X$ is a strict ${\cal F}^F$-core and such that $$\sum_{e \in \varrho_{E(G^F)}(X)}{x_e} < \ell + 1 \; .$$ Let $t \in X \cap T$, then $(V \setminus X,X)$ is an $rt$-cut of capacity $< 1$ in $(V,E(G^F),\{x_e\}_{e \in E(G^F)})$. Note that since $X$ is a strict core, no edge in $F$ enters $X$. Therefore $(V \setminus X,X)$ is an $rt$-cut of capacity $< \ell + 1$ in $H_t$. We conclude that the capacity of the minimum $rt$-cut $U$ found by the algorithm is strictly less than $\ell + 1$. It remains to show that $Y = V \setminus U$ is a strict ${\cal F}^F$-core. Clearly $t \in Y\cap T$ and $r \notin Y$. Let $e \in F$, then $x^t_e=1$. Since 
$$0 \le |\varrho_F(Y)| = \sum_{e \in \varrho_F(Y)}{x^t_e} \le \sum_{e \in \varrho_{E'}(Y)}{x^t_e} - \ell < 1 \;,$$
then $Y \in {\cal F}^F$. Consider $s \in Y \cap T$. If $s \notin C$, then $rs \in \varrho_{E'}(Y)$ and thus $\sum_{e \in \varrho_{E'}(Y)}{x^t_e} \ge \ell + x^t_{rs} = \ell + 1$. Therefore $C$ is the unique element of ${\cal M}({\cal F}^F)$ contained in $Y$, and $Y \cap T = C \cap T$.
\end{proof}

\begin{corollary}\label{cor:seperation}
Algorithm~\ref{alg:LPSep} is a polynomial time separation oracle for the linear program~\eqref{eq:LPCores}
\end{corollary}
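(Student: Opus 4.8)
\textbf{Proof proposal for Corollary~\ref{cor:seperation}.} The plan is simply to assemble the three facts established immediately above the corollary statement. Recall that a separation oracle for \eqref{eq:LPCores} must, on input $x \in \mathbb{R}_+^{E(G^F)}$, either correctly certify that $x$ satisfies every constraint of \eqref{eq:LPCores}, or output a constraint (equivalently, a strict ${\cal F}^F$-core $X$) for which $\sum_{e \in \varrho_{E(G^F)}(X)}{x_e} < 1$; and it must do so in time polynomial in the size of $G$.

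First I would invoke the running-time analysis: for each $t \in T$ we can find $C_t$ in polynomial time by Claim~\ref{c:minCuts}, decide whether $C_t \in {\cal M}({\cal F}^F)$ in polynomial time by Proposition~\ref{prop:minSets}, build the auxiliary capacitated multigraph $H_t$ in polynomial time, and compute a minimum $rt$-cut in $H_t$ in polynomial time; since $|T| \le |V|$, Algorithm~\ref{alg:LPSep} runs in polynomial time overall. Next I would appeal to the first claim, which shows that whenever $x$ is feasible for \eqref{eq:LPCores}, every cut examined by the algorithm has capacity at least $\ell+1$ in the relevant $H_t$, so the algorithm reaches its final line and reports ``$x$ is feasible'' — never producing a spurious violated constraint. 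Finally I would appeal to the second claim, which shows that whenever $x$ is infeasible, some strict ${\cal F}^F$-core $X$ with $\sum_{e \in \varrho_{E(G^F)}(X)}{x_e} < 1$ exists, forcing the corresponding $H_t$ to have a minimum $rt$-cut of capacity strictly less than $\ell+1$; the set $Y = V \setminus U$ returned by the algorithm is then argued (in that claim) to itself be a strict ${\cal F}^F$-core whose associated constraint is violated by $x$, hence a legitimate separating hyperplane.

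Combining these, Algorithm~\ref{alg:LPSep} returns a correct answer on every input and runs in polynomial time, which is exactly the assertion that it is a polynomial time separation oracle for \eqref{eq:LPCores}. There is essentially no remaining obstacle here — all the content lives in the running-time paragraph and the two claims, which are already in place; the only care needed is to phrase the equivalence between ``violated constraint of \eqref{eq:LPCores}'' and ``strict ${\cal F}^F$-core $X$ with $\sum_{e \in \varrho_{E(G^F)}(X)} x_e < 1$'' so that the output format of the oracle matches the constraint set of the LP exactly.
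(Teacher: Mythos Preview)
Your proposal is correct and matches the paper's approach exactly: the corollary is stated without proof because it is immediate from the polynomial running-time paragraph together with the two preceding claims, and that is precisely what you assemble.
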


\begin{proof}[Proof of Lemma~\ref{l:strictAlgorithm}]
From Corollary~\ref{cor:seperation} and Lemma~\ref{appC:hitSetEffApprox} we deduce that the randomized rounding algorithm for the hitting set problem (Algorithm~\ref{appAlg:HSRR}) outputs a set $A \subseteq E(G^F)$ such that
$w^1(A) \le O(\log|{\cal S}^F|) \opta(G,E_\ell)$. Following Lemma~\ref{l:approxValue}, $|{\cal S}^F| \le 2^{|S|}|T|$, and therefore $w^1(A) \le O(|S|) \opta(G,E_\ell)$. Moreover, Lemma~\ref{appC:hitSetEffApprox} guarantees that  with probability at least $1 - 2^{-|E|}$ the algorithm runs in polynomial time.
\end{proof}

\end{document}